\documentclass[12pt,a4paper,titlepage,reqno]{amsart}
\usepackage{amsfonts}
\usepackage{latexsym}
\usepackage{graphicx}
\usepackage{amsmath}
\newtheorem{thm}{Theorem}
\newtheorem{cor}{Corollary}
\newtheorem{lem}{Lemma}
\newtheorem{prop}{Proposition}
\theoremstyle{definition}
\newtheorem{defn}{Definition}
\theoremstyle{remark}
\newtheorem{rem}{Remark}
\numberwithin{equation}{section}

\newcommand{\R}{\mathbb{R}}

\begin{document}

\thispagestyle {empty}

\title[]{Optimality of the free quantum evolution: the general case with nodes }
\author{\bf {Laura M. Morato}}

\dedicatory{ Retired professor,Universit\`a di Verona; E-mail:   morato.lauramaria@gmail.com}

\begin{abstract}
Extending and refining a preceding work,  the free evolution of a  quantum wave function, with fixed initial and final modulus, is considered in the general case where nodes are allowed. The method is based on considering a suitable simple functional related to the quadratic optimal transport cost in discrete time.  Under reasonable assumptions it is shown that, if nodes are present, the free evolution of the wave function with fixed initial and final modulus is a sort of local minimum.  If  no nodes are present the minimum is global, suggesting intrinsic instability of  the nodes in a free quantum evolution. 
\end{abstract}
\maketitle

%%% -------------------------------
\pagestyle{myheadings}
\thispagestyle{myheadings}
%\markboth{}
\markboth{L. M. Morato} {Free evolution}

%%% ------------------------------

\section{Introduction}

Consider  on the time interval $[0,1]$ the free Schroedinger equation for a finite dimensional system

\begin{equation}\label{Schroedinger}
i\partial_t \psi + \frac1 2 \nabla^2 \psi = 0
\end{equation}

\noindent and let $\psi $ be a solution in $L^2(\R^d, \mathbb C)$ such that $|\psi_o|^2 =\rho_o$ and $|\psi_1|^2=\rho_1$ for  some probability densities $\rho_o$ and $\rho_1$ on $\R^d$.  For the sake of simplicity we consider in this paper only the case where the solution is unique as an element of $L^2(\R^d, \mathbb C)$. 
We ask the question whether such an evolution is optimal, i.e. whether it corresponds to a minimum for a properly formulated optimisation problem. 

In a preceding work \cite {M2} it was shown that this is in fact the case when no nodes are present, and that the optimisation problem can be formulated in terms of  a generalisation of the quadratic cost functional of the Optimal Transport Theory, which can also be seen as an average of the classical action for a free particle (see \cite{V} section 8 and \cite{B.B.})

In \cite{M2} the result was made possible by exploiting the stochastic structure underlying the canonical formalism of Quantum Mechanics, for systems with a finite degrees of freedom, provided by Nelson's Stochastic Mechanics, introduced by Nelson in 1966 \cite {Nelson1}, where in particular  every solution of 1 is associated with a Markov diffusion process, with values in the configuration space, usually  called "Nelson's Diffusion".The Nelson diffusion associated with a wave function $\psi$ has a time dependent  probability density $\rho$  that is equal to $|\psi|^2$ and satisfies the fluid dynamics continuity equation with respect to a velocity field $v$ that is equal to the gradient of the phase.

     It is worth quoting that at the beginning of the history of Nelson's Stochastic Mechanics the occurrence of nodes in the wave function corresponded to a severe difficulty  from the mathematical  point of view:  in fact the presence of nodes gives rise to strong unboundedness in the drifts of Nelson diffusion, making difficult to prove the existence of the diffusion itself. 
      In the absence of nodes one can exploit the classical theorems on Stochastic Differential Equations with coefficients satisfying the Lipschitz and the sublinear growth  conditions, that ensure the existence of a unique strong solution. In  case  nodes are present the drifts are too singular. The problem was solved by Carlen in 1985 \cite{C1}. He introduced the concept of diffusions with "proper characteristics" and proved a weak existence result.      
     
The space of proper characteristics, denoted by $\mathcal P$,(see definition 1 in section 2) is given by a large set of fluid dynamics pairs $(\rho,v)$, where $\rho$ is the time dependent density of the flow and $v$ is the related velocity field. In this framework Carlen proved in great generality that all Nelson diffusions associated with solutions of a Schroedinger equation exist in a weak sense.
  
  The occurrence of nodes in the wave function has also generated a huge amount of literature concerning 
  the question whether Stochastic Mechanics, or other somehow similar theories, are correct or not from a physical point of view. The author wish to stress that the purpose of this work is only to establish a mathematical fact concerning the Schroedinger Equation for a free finite dimensional system. The reader can find a review on the above mentioned subject  for example in \cite{Reddiger}.
  \vskip5mm
  
 A difficult in extending  to the case with nodes the optimisation result given in \cite{M2}, by exploiting stochastic techniques, is that one needs to define all possible Nelson  diffusions  that are considered as defined on the same probability space, a property that is not guaranteed by the weak existence of the diffusions themselves.

  To overcome the problem we introduce a distinguished subset of the "proper  characteristics", denoted by $\Xi$, (see definition 2 in section 2), which in fact takes into account common examples that one can find in the Quantum Mechanics  text books, and we can prove that the diffusions with proper characteristics in $\Xi$ exist as strong solutions of a Stochastic Differential Equation.
    \vskip 10mm
   Introducing the non convex functional on $\mathcal P$, 
   
   $$
   A^Q(\rho,v):=\int _0^1 \int _{\R^d} (v^2- (\frac 1 2\nabla \ln \rho)^2)\rho dx dt, \quad (\rho,v)\in \mathcal P
   $$
   \vskip 10mm  
  
  \noindent and exploiting the good properties of Nelson's diffusions with characteristics in $\Xi$, (see proposition 1, proposition 2 and Lemma 2 in section 4),  we can prove that  a solution $  \psi$ of 1 with nodes is a sort of local minimum on the subset of pairs
   $(\rho,v)$ in $\Xi$ such that $\rho(0) = \rho_o$ and $\rho(1) = \rho_1$ while, in the assumption that $\psi$ is unique up to a constant phase, it is a global minimum if no nodes are present. 
   
   This fact suggests that the nodes  in a free quantum evolution are intrinsically unstable.

  The method relies on the construction of a convex functional which is a "representation" of $A^Q$ on a linear space of stochastic processes.
  \vskip5mm

 The paper is organized as follows:
 \vskip5mm 
 
  In section 2 we recall the definition of the space of proper (fluid dynamic) characteristics $\mathcal P$ as introduced by Carlen \cite{C1}. We then introduce the distinguished subset  $\Xi \in \mathcal P$   and  formulate the optimisation problem $P_1$.

 \vskip5mm 
 
 In Section 3 we state an extremality result  related to  $P_1$, which is a refinement  of a stationarity principle for Stochastic Mechanics that was given in the framework of the Stochastic Control Theory in \cite{GM} and \cite{Nelson2}. The refinement is done by exploiting the fluid dynamics formulation of it proposed by Loffredo \cite {L}.

\vskip 5mm
  
   In Section 4 we prove some properties of the diffusions with "proper characteristics" belonging to $\Xi$ and show in particular that they are strong solutions of a SDE so that all diffusions we are considering can be defined on the same probability space.

  \vskip 5mm
  In Section 5  we introduce a  suitable linear metric space of square integrable (non necessarily adapted and markovian) stochastic processes. A relevant subset of such a space is given by its markovian elements.
   We then consider  the elementary convex functional $F_n$ in discrete time (see (5.2)) and, exploiting  Nelson's renormalization formula (lemma 2), we are able to define an asymptotic convex functional
    $\hat F_\infty$  in continuous time. The restriction of this functional  on the set of markovian elements turns to be a "convex representation" of $A^Q$.
  \vskip10mm
  In Section  6 we prove the optimality result (Th 2) .

 \section{ The space of proper (fluid dynamics) characteristics and position of the optimisation problem }
 
 \vskip 5mm

 1)   {\bf The fluid associated with a smooth diffusion}
\vskip 5 mm

Let
\vskip 2mm 

- $b: R^d\times [0,1]\rightarrow \R^d$ be smooth and with sublinear growth. 
\vskip 2mm 
 
- $\rho_o$ be a smooth strictly positive probability density with finite variance on $\R^d$.
\vskip 2mm 
 
- $\xi_o$  be a $\R^d$-valued random variable that has probability density $\rho_o$.
\vskip 2mm

 Then, given an arbitrary probability space and a standard Wiener process on it, independent of $\xi_o$, there exists a unique solution to the SDE

$$
\xi_t = \xi_o + \int _0^t b(\xi_s,s)ds + W_t
$$

 Moreover $\xi_t$ has a probability density $\rho_t$ for all $t\in [0,1]$, which is a smooth strictly positive  solution of the Fokker-Planck equation

 $$
 \partial _t \rho= -\nabla\cdot(b\rho)+ \frac 1 2 \nabla^2 \rho
 $$
 \vskip 2mm
(Notice: one could also consider the case when the diffusion matrix is not equal to the identity).
\vskip 2mm

Defining

$$
v:= b- \frac {1}{ 2} \nabla \log \rho
$$

\noindent one can see that the pair $(\rho,v)$ satisfies the continuity equation

$$
\partial _t \rho - \nabla \cdot(\rho v)= 0
$$
\noindent that models the conservation of mass for a physical fluid with density $\rho$ and velocity field $v$ .

So, under  regularity assumptions on the coefficients and the assumption that the density is strictly positive, to any diffusion is naturally attached a flow in $\R^d$ with density $\rho$ and velocity field $v$.

 Carlen calls $(\rho,v)$ the "infinitesimal characteristic of the diffusion $\xi$". I will refer to it as the "fluid dynamics characteristic" or simply "characteristic" of $\xi$.

This observation (whose relevance was pointed out by Nelson in '66) was the starting point exploited by Carlen in order to prove the existence of a solution, in the Strook and Varadhan sense, for a wide class of SDE with unbounded drifts and constant diffusion matrix . Roughly, the idea is that a wide class of SDE with unbounded drifts have a  weak solution if they are constructed by means of a fluid dynamic characteristic with a "reasonable behaviour". Nelson's diffusions correspond, in great generality, to a subset of such a class.
\vskip 10mm

  We will use of the following notation:

\begin{equation}\label {notation}
b[\rho,v]:= v+ \frac {1}{ 2 }\nabla \ln \rho
\end{equation}

\vskip 10mm

2) { \bf The space of proper (fluid dynamic) characteristics and a distinguished subset of it}
\vskip10mm

Carlen defines the "set of proper characteristics " $\mathcal P$ as follows \cite{C3}:

\begin{defn}
 We say that $(\rho,v)$ belongs to $\mathcal P $ if $\rho$ is a continuously depending on time probability density on $\R^d$ and  $v$ is a measurable $d$-dimensional time dependent vector field on $\R^d$, such that the "finite action condition"  and the continuity equation, in the sense of distributions, hold, i.e.

\begin{equation}\label{finite energy}
\int _0^1 \int _{\R^d} (v^2+ (\frac 1 2\nabla \ln \rho)^2)\rho dx dt < \infty\\
\end{equation}

\noindent and

\begin{equation}
\partial _t\rho + \nabla\cdot(\rho v)=0 \quad (\text{distrib.})\\
\end{equation}
\end{defn}

\vskip 10mm

Motivated by typical examples of eigenfunctions with nodes that one can find in Quantum Mechanics  text books, we restrict our attention to the following subset $\Xi$
of the set of proper characteristics $\mathcal P$.

Putting  

 $$
 N_{\rho} := \{(x,t)\in \R^d\times [0,1] : \rho(x,t)=0     \}
 $$
 
 \noindent we give the following definition
\vspace{5mm}

\begin{defn}

$\Xi $:= $\{$     $(\rho,v)$ that belong to $\mathcal{P}$  and are such that
\vspace {2mm}

h1)   If it is not empty, the set $N_\rho$ is the finite union of isolated points or regular  curves or (iper)surfaces in $\R^d\times [0,1]$ such that $N^c_\rho$ is connected or the finite union of open connected sets.

\vspace{2 mm}
h2)  $\rho$ is of class $C^ {\infty} $ on $\R^d \times [0,1]$ and  the restriction on $N^c_\rho$ of $v$ is also of class $C^{\infty}$ .

\vspace{2 mm}
h3) The limit of $\rho(x,t)v(x,t)$ for $(x,t)$ going to the border of $N_\rho$, with $(x,t)\in N_{\rho}^c$, is equal to zero.$\}$
\end{defn}

\vspace{2mm}
As far as the free evolution is concerned, the simplest way of constructing  examples with nodes is that of considering as initial wave function one of the above mentioned examples.
 \vskip 2mm

{\bf Example}   We consider the solution of the one dimensional free Schroedinger equation with the first excited state of the one dimensional harmonic oscillator as initial state, i.e.

$$i \partial_t \psi (x,t)+\frac 1 2  \partial _{xx}\psi(x,t) = 0\quad,\quad \psi(x,0) = \psi_0(x)$$

$$\psi_0 (x) = C(0) x  e^{-x^2/4}$$

After some manipulations one finds

$$\psi (x,t) = C(t) x e^{ -x^2/{4 r(t)}} e^ {i t x^2/{8 r(t)}}$$

\noindent where $C(t)$  and $r(t)$ are time dependent positive constants.

Then we have

\vskip 2mm

--$\rho(x,t) = |C(t)|^2 x^2 e^{- x^2/ 2 r(t)}$

\vskip 2mm
-- $N_\rho = \{(x,t) :\quad x = 0, \quad t \in [0,1]\}$.
\vskip 5mm
For all $(x,t)\in N_\rho^c $ the drift is  

$$
b[\rho,v](x,t):= v(x,t)+ \frac 1 2 \nabla \ln \rho(x,t)
$$

\noindent where

$$ \frac 1 2 \nabla ln \rho (x, t) = \frac 1 x - \frac x {2 r(t)} $$

$$v(x, t) = \frac {t x}{4 r (t)} $$

\vskip 10mm

3) {\bf Position of the problem}
\vskip 10mm
We consider on the space of proper characteristics $\mathcal P$ the functional

\begin{equation}
A^Q(\rho,v):=
\int _{\R^d}\int_0^1 (v^2-(\frac 1 2 \nabla ln \rho)^2)\rho dt dx
\end{equation}

This functional can be seen as a generalisation of that considered in the framework of the Differential Point of Vew in Optimal Transportation Theory  ( see \cite{B.B.} and \cite{V} Section 8), or the regularized limit of a mean discretized classical action for a diffusive motion (see Section 5).
\vskip 5mm

We denote by  $\Xi(\rho_o,\rho_1)$ the set of pair $(\rho,v)$ which belong to $\Xi$ and such that
 $\rho(0)=\rho_o$ and  $\rho(1)=\rho_1$.

The optimisation problem that we are considering is
\vskip 10mm

$P_1$

\begin{equation}
  \underset {(\rho,v)\in \Xi(\rho_o,\rho_1)} {min}A^Q(\rho,v)
\end{equation}
\vskip 10mm

\section{A sufficient and necessary extremality condition}

Here we give a refinement of the fluid dynamics version, proposed in \cite{L}, of the stationarity principle for Stochastic Mechanics given, in the framework of the Stochastic Control Theory, 
 in \cite {GM} and \cite{Nelson2}.
  
  Handling the nodes of the wave function is not easy in the framework of the stochastic control  theory. On the contrary it turns to be  natural in the fluid dynamics approach.
  
 (The following lemma  can be easily extended to the case of a system subject to a scalar or electromagnetic potential).

\begin{lem}

Let  $(\delta \rho,\delta v)$ satisfy the following conditions
\vskip 2mm
 a) $\delta \rho $ belongs to $\in C_K^\infty (N_\rho^c , \mathbb{R})$ and its support is in $N_\rho^c$. 
 
 b) $\int_{\R^d} \delta \rho dx $ is equal to zero.
 
c)  $\delta v$ belongs to $  C_K ^\infty(N_\rho^c\ , \R ^d)$ and its support is in $N_\rho^c$. Moreover the pair $(\rho+y\delta \rho, v+y\delta v)$ satisfies the continuity equation in a neighborhood of $y=0$, i.e., for all $(x,t)\in N_\rho^c$,  

 $$
 \frac {d}{dy}\{\partial _t (\rho +y\delta \rho)+\nabla \cdot [(\rho+y\delta\rho)(v+y\delta v)]\}(x,t) |_{y=0} =0     
$$

\vskip 2mm
Putting

 \begin{equation}\label{D}
\ D A^Q(\rho,v) (\delta\rho,\delta v) := \underset{\epsilon \to 0^+}{lim} \frac 1 \epsilon [A^Q (\rho + \epsilon \delta \rho, v + \epsilon \delta v)-A^Q(\rho , v ) ]
\end{equation}

\noindent a sufficient condition in order that an element $(\rho,v)$  of 
    $\Xi (\rho_0,\rho_1)$ satisfies the equality
\begin{equation}\label{critical}
 D A^Q (\rho, v) (\delta\rho, \delta v) = 0 , 
\end{equation}
\noindent $ \forall (\delta \rho, \delta v)$ such that a), b) and c) are satisfied, is that the following holds
\vspace {5mm}

i) (Local gradient condition) There exists   $S \colon N_\rho^c \to  \R$  such that, for all
 $(x,t)\in N_{\rho}^c$,
 
 \begin{equation} \label{Gradient}
v(x,t) = \nabla S(x,t)
\end{equation}

ii) The pair $(\rho, S) $ satisfies the Hamilton Jacobi Madelung's equation on $N_\rho^c$ i.e., for all $(x,t)\in N_\rho^c$,

\begin{equation}\label{HJM}
\partial_t S(x,t) + \frac 1 2 (\nabla S(x, t))^2 -
\frac 1 2 \frac {\nabla^2 \sqrt{\rho(x,t)}}{\sqrt{\rho(x,t)}} = 0
\end{equation}
\vskip 5mm

If $v(x,t)$ is different from zero on $N_\rho^c$  then the condition is also necessary.

\end{lem}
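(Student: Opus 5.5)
Compute the first variation directly, then convert it with the linearized continuity equation into a pairing with a test potential.

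\textbf{First variation.} Since $\delta\rho$ and $\delta v$ are smooth with compact support in $N_\rho^c$, where $\rho>0$ and everything is $C^\infty$ by h2), we may differentiate under the integral. Using $(\tfrac12\nabla\ln\rho)^2\rho=|\nabla\sqrt\rho|^2$, we get
$$
DA^Q(\rho,v)(\delta\rho,\delta v)=\int_0^1\!\!\int\Big(2\rho v\cdot\delta v+v^2\delta\rho-2\nabla\sqrt\rho\cdot\nabla\frac{\delta\rho}{2\sqrt\rho}\Big)dx\,dt .
$$
Integrating by parts in the last term (no boundary terms, by compact support) turns it into
$$
\int\!\!\int \frac{\nabla^2\sqrt\rho}{\sqrt\rho}\,\delta\rho\,dx\,dt .
$$
So the variation equals
$$
\int\!\!\int\Big[\big(v^2+\tfrac{\nabla^2\sqrt\rho}{\sqrt\rho}\big)\delta\rho+2\rho v\cdot\delta v\Big]dx\,dt .
$$
Condition c) gives the linearized continuity equation
$$
\partial_t\delta\rho+\nabla\cdot(\delta\rho\, v+\rho\,\delta v)=0\quad\text{on } N_\rho^c .
$$

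\textbf{Sufficiency.} Assume i) and ii). Write $2\rho v\cdot\delta v=2\nabla S\cdot(\rho\delta v)$ and integrate by parts in $x$; compact support in $N_\rho^c$ kills boundary terms and h3) is not even needed here. This gives $-2\int\!\!\int S\,\nabla\cdot(\rho\delta v)$. Substituting $\nabla\cdot(\rho\delta v)=-\partial_t\delta\rho-\nabla\cdot(\delta\rho\,v)$ and integrating by parts again in $t$ and $x$ (supports are compact in $N_\rho^c$, so they avoid $t=0,1$ as well) yields
$$
2\int\!\!\int S\,\nabla\cdot(\rho\delta v)=2\int\!\!\int\big(\partial_t S+|\nabla S|^2\big)\delta\rho .
$$
Hence
$$
DA^Q=\int\!\!\int\Big[-2\partial_tS-|\nabla S|^2+\frac{\nabla^2\sqrt\rho}{\sqrt\rho}\Big]\delta\rho
=-2\int\!\!\int\Big[\partial_tS+\tfrac12|\nabla S|^2-\tfrac12\frac{\nabla^2\sqrt\rho}{\sqrt\rho}\Big]\delta\rho=0
$$
by \eqref{HJM}.

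\textbf{Necessity.} I would take special variations and proceed in two steps.

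1. \emph{Gradient condition.} First take $\delta\rho=0$. Then c) requires only $\nabla\cdot(\rho\delta v)=0$, and criticality says $\int\!\!\int v\cdot(\rho\delta v)=0$ for every compactly supported divergence-free field $w=\rho\delta v$ on each time slice. A de Rham/Helmholtz-type argument then gives $v(\cdot,t)=\nabla S(\cdot,t)$ locally, with $S$ defined on each connected component of $N_\rho^c$ by h1). One must also arrange $S$ to be smooth in $t$ by fixing the additive constant, which uses the freedom in $t$ of the test fields.

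2. \emph{Hamilton--Jacobi--Madelung.} With i) in hand, the computation above shows $\int\!\!\int H\,\delta\rho=0$, where $H$ is the HJM expression, for all admissible $\delta\rho$. The key point is that every $\delta\rho\in C^\infty_K(N_\rho^c)$ with zero spatial mass at each $t$ is admissible. Indeed, $\partial_t\delta\rho+\nabla\cdot(\delta\rho\,v)$ has zero spatial integral, so one can solve $\nabla\cdot(\rho\delta v)=g$ with compact support in the connected region. Hence $H(\cdot,t)$ is a function of $t$ alone on each component, and this function is absorbed by redefining $S\mapsto S-\int_0^t c$, giving \eqref{HJM}.

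The main obstacle is this solvability and connectivity step: producing compactly supported $\delta v$ inside $N_\rho^c$ and controlling the topology of the components from h1). I expect this is where the hypothesis $v\neq0$ enters. With $v\neq0$ one can choose $\delta v$ parallel to $v$ to realize prescribed variations, so that the constraint does not degenerate. I would first try to construct $\delta v$ explicitly along the flow lines of $v$.
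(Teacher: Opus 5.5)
Your sufficiency argument is the paper's. The paper adjoins the continuity equation with a multiplier $\lambda$ and takes $\lambda=2S$, which is exactly your pairing of the linearized continuity equation with $2S$. One caveat: supports must avoid $t=0,1$. Compactness in $N_\rho^c\subset\R^d\times[0,1]$ does not give this, and otherwise $2\int S\,\delta\rho\,dx\,\big|_{t=0}^{t=1}$ survives; the paper assumes this tacitly too.

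\textbf{The gap is in necessity, at the step you flag.}
\begin{itemize}
\item Slice-wise de Rham gives $S(\cdot,t)$ only up to an additive constant on each connected component of each slice $\{x:(x,t)\in N_\rho^c\}$.
\item Your two informal steps are choosing these constants smoothly in $t$, and absorbing the residual via $S\mapsto S-\int_0^t c$. Both presuppose a single $c(t)$ on each space-time component, which fails when slice components split or merge.
\item Example: for $d=1$ and a nodal segment $N_\rho=\{x_0\}\times[t_1,t_2]$, you get $H=c_\pm(t)$ on the two sides for $t\in(t_1,t_2)$. A global $S$ then exists only if $\int_{t_1}^{t_2}(c_+-c_-)\,dt=0$, a condition your argument never derives.
\item Likewise, a $\delta\rho$ with zero total spatial mass is not admissible if it moves mass between components of a slice.
\end{itemize}
This obstruction is topological. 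The hypothesis $v\neq0$ does not help, and constructions along flow lines will not remove it. Nor does the paper close it. There, $v\neq0$ is used only to make $\delta v^*(x^*,t^*)\neq0$ in a pointwise multiplier argument. The paper then sets the $\delta\rho$- and $\delta v$-parts of $D\bar A$ to zero separately, although $(\delta\rho^*,0)$ and $(0,\delta v^*)$ are not admissible variations.

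\textbf{The missing idea is to work in space-time.} On the open set $U:=N_\rho^c\cap\{0<t<1\}\subset\R^{1+d}$, put $J=(J_0,J'):=(\delta\rho,\ \rho\,\delta v+\delta\rho\,v)$.
\begin{itemize}
\item Condition c) says exactly that $\partial_tJ_0+\nabla\cdot J'=0$.
\item Conversely, every divergence-free $J\in C^\infty_K(U,\R^{1+d})$ comes from an admissible pair via $\delta v=(J'-J_0v)/\rho$, and b) is then automatic.
\end{itemize}
Your first-variation formula becomes
\[
DA^Q(\rho,v)(\delta\rho,\delta v)=2\int_0^1\!\!\int\Big[v\cdot J'+\Big(\tfrac12\tfrac{\nabla^2\sqrt\rho}{\sqrt\rho}-\tfrac12|v|^2\Big)J_0\Big]\,dx\,dt .
\]
So criticality says that the smooth field $\alpha:=\big(\tfrac12\tfrac{\nabla^2\sqrt\rho}{\sqrt\rho}-\tfrac12|v|^2,\ v\big)$ annihilates every compactly supported divergence-free field on $U$. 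Your step 1 is the special case $J_0=0$.

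Now test with mollified line currents along closed loops in $U$; these are such fields. This shows every loop integral of $\alpha$ vanishes, so $\alpha=\nabla_{t,x}S$ on each component of $U$. That is i) and ii) at once, with no use of $v\neq0$ and no topological hypothesis on $N_\rho^c$. The same identity gives sufficiency in one line.
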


\begin{proof}

1) Sufficient condition
\vskip 5mm

\vspace{2mm}

Let $\lambda : N_\rho^c  \to  \R$ differentiable. Define, for every $(\rho,v)\in \Xi$, the extended action

$$
\bar A (\rho,v,\lambda) := A^Q (\rho, v) + \int \int_{\R^d\times [0,1]} \lambda (\partial_t \rho + \nabla\cdot (\rho v)) dx dt
$$

 For every $(\rho,v)$ in $\Xi $, one has

$$
\bar A(\rho,v,\lambda) := A^Q (\rho, v)
$$
\noindent and, for all $(\delta \rho,\delta v)$ that satisfy a), b) and c)
$$
 D \bar A(\rho, v, \lambda) (\delta \rho, \delta v, 0) =  D A^Q (\rho, v)(\delta \rho, \delta v)
$$
\vskip 2mm

We observe that we can write, for some integer $m$, 

$$
N_{\rho}^c := \bigcup _{i=1}^{m} N_i^c
$$

\noindent where, for all  $i= 1,\dots,m $, $N_i^c$, is a  connected region in $\R^d\times [0,1]$.  The extended action reads, being $N_\rho$ of zero Lebesgue measure on $\R^d\times [0,1]$,

$$ 
\bar A (\rho,v,\lambda)=\sum _{i=1}^{m} \int_{N_i^c} (v^2-(\frac 1 2 \nabla ln \rho)^2)\rho + \lambda (\partial_t \rho + div (\rho v))] dx dt
$$

 Calculating the derivative  of $\bar A$, as defined by \eqref{D}, and integrating by parts each term,  we get

$$
 D \bar A (\rho, v, \lambda) (\delta\rho, 0, 0) =
 $$
 
\begin{equation}\label{E1}
= \sum _{i=1}^{m} \int_{N_i^c} [-\partial_t \lambda -( \nabla\lambda) v + v^2 + \frac {\nabla^2 \sqrt\rho}{\sqrt \rho}]\delta \rho dx dt 
\end{equation}

\noindent and
$$
 D \bar A(\rho, v, \lambda) (0, \delta v, 0) =
$$
\begin{equation}\label{E2}
= \sum _{i=1}^{m} \int_{N_i^c}  (2 v - \nabla \lambda) \rho \delta v dx dt 
\end{equation}
(The calculation is very easy exploiting the change of variables $\rho:= \exp 2R$).

Then if \eqref{Gradient} and \eqref{HJM} are satisfied for every $(x,t)\in N_{\rho}^c$ and choosing $\lambda : = 2 S $ we have 
$$
D A^Q (\rho,\nabla S)(\delta \rho,\delta v)= D \bar A(\rho, \nabla S, 2 S) (\delta\rho, \delta v, 0) = 0
$$

\noindent for all $(\delta \rho,\delta v)$ which satisfy a), b) and c).

\vskip 10mm

2) Necessary condition
\vskip5mm 

We assume that for some $(\rho, v)\in \Xi$, with $v(x,t)$ different from zero for every $(x,t)$ belonging to $N_\rho^c$, we have

 \begin{equation}\label {C_o}
  D A^Q (\rho, v) (\delta\rho, \delta v) = 0 
 \end{equation}
 
 \noindent for all $(\delta \rho,\delta v)$ satisfying a), b) and c).

 Let $(x^*,t^*)$ belong to  $N_\rho^c$. We choose a $\delta\rho^* $ , satisfying a) and b) such that  its support is a neighborhood $I_{(x^*,t^*)}$  of $(x^*,t^*)$ . We assume that $I_{(x^*,t^*)}$ is a subset of $N_\rho^c$ and that $\delta\rho^* $ is positive in the point $(x^*,t^*)$ and has there its maximum. 
 
 To find a corresponding $\delta v^*$ we observe that condition c) gives the linear equation in the unknown $\delta v^*\rho$. 

\begin{equation}\label {cont inf}
\nabla\cdot (\delta v^*\rho)= -[\partial_t \delta \rho^* +\nabla\cdot (\delta \rho^* v)]
\end{equation}

Then, after a simple manipulation and recalling  a) and b), one can see that there exists $\delta v^*$ such that $(\delta \rho^*,\delta v^*)$ satisfies all the assumptions of the lemma.
\vskip 5mm

We observe that, by \eqref{cont inf},  if $\delta\rho^* $ reaches its maximum at $(x^*,t^*)$, then $\partial_t \delta \rho^*(t,x)$ is equal to zero in $t=t^*$ and, since  $v(x^*,t^*)$ is different from zero by assumption, one can choose $\delta v^ * (x^*,t^*)$ also different from zero so that $\delta v^*$ is different from zero in a neighborhood of $(x^*,t^*)$. 
  
   By the assumption we have
 \begin{equation}\label {C}
  D A^Q (\rho, v) (\delta\rho^*, \delta v^*) = 0 
 \end{equation}
 
\noindent so that, for any differentiable function  $\lambda :N_\rho^c\to \R$,
 
$$
  D\bar A^Q (\rho, v,\lambda) (\delta\rho^*, \delta v^*,0) = 0 
 $$
 
 \noindent and, in particular
 
 $$
  D\bar A^Q (\rho, v,\lambda) (\delta\rho^*,0,0) = 0 
 $$

\noindent and

  $$
  D\bar A^Q (\rho, v,\lambda) (0, \delta v^*,0) = 0 
 $$

 Then, recalling \eqref{E1} and \eqref{E2}, the fundamental argument in calculus of variations gives

\begin{equation}\label{N1}
 -\partial_t \lambda (x^*,t^*)- \nabla \lambda(x^*,t^*) v(x^*,t^*) + v(x^*,t^*)^2 + \frac {\nabla^2 \sqrt\rho(x^*,t^*)}{\sqrt \rho(x^*,t^*)}=0
 \end{equation}
 \vskip 10 mm
 
  \noindent and
  \begin{equation}\label {N2}
  (2\nabla\lambda(x^*,t^*) -v(x^*,t^*))\rho(x^*,t^*) = 0
 \end{equation}
 
 By the arbitrariness of $(x^*,t^*)\in N_\rho^c$  equalities  \eqref{N1} and \eqref{N2} hold on every point of $N_\rho ^c$.
 
 Then, denoting $2 \lambda$ by $S$, necessarily we have both  the local gradient condition
 
 $$
 v(x,t)= \nabla S(x,t) ,\quad   \forall (x,t )\in N_\rho^c
$$

 \noindent and the HJM equation on $N_\rho^c$
 
$$
\partial_t S(x,t) + \frac 1 2 (\nabla S(x, t))^2 -
\frac 1 2 \frac {\nabla^2 \sqrt{\rho(x,t)}}{\sqrt{\rho(x,t)}} = 0,\quad   \forall (x,t )\in N_\rho^c
$$

  \end {proof}

\vskip 5mm

\begin{rem}

Let $\tilde\delta\rho$ have  a support that is not a subset of $N_\rho^c$. We have, expliciting the boundary  term,

$$
 D \bar A (\rho, v, \lambda) (\tilde \delta\rho, 0, 0) =
 $$
 
$$
= \sum _{i=1}^{m} \int_{N_i^c} [-\partial_t \lambda - \nabla (\lambda) v + v^2 + \frac {\nabla^2 \sqrt\rho}{\sqrt \rho}]\tilde \delta \rho\} dx dt +$$

$$
 + \sum _{i=1}^{m} \int_{N_i^c} \{\partial_t(  \lambda \tilde \delta \rho) + \nabla\cdot( \lambda v\tilde \delta\rho) \}dx dt 
$$

Then,  by the divergence theorem, the boundary term goes not to zero because $ \tilde \delta \rho$ does not vanish on $N_\rho$ .Thus
  $D \bar A (\rho, v, \lambda) (\tilde \delta\rho, 0, 0)$  cannot be equal to zero.

\end{rem}

\vskip 20mm

\begin{cor}

If $\psi$ is a solution in $L^2(\R^d , \mathbb C)$ of the Schroedinger equation

\begin{equation}\label{Schroedinger1}
i\partial_t \psi + \frac1 2 \nabla^2 \psi = 0, \quad |\psi_o|^2 =\rho_o, \quad |\psi_1|^2 =\rho_1
\end{equation}

\noindent and $(\rho,\nabla S)\in \Xi(\rho_o,\rho_1)$, then 
\begin{equation} 
D A^Q (\rho, \nabla S) (\delta\rho, \delta v) = 0
\end{equation}

\noindent for all $(\delta\rho, \delta v)$ satisfying a), b and c).

Viceversa, if

\begin{equation}\label{A2}
D A^Q (\rho, v) (\delta\rho, \delta v) = 0
\end{equation}

 \noindent for some $(\rho,v)\in \Xi _{\rho_o,\rho_1}$  such that $v$ is different from zero on $N_\rho^c$ and for all $(\delta\rho, \delta v)$ satisfying a), b) and c), then there exists a function $S:N_\rho^c\times [0,1]\to \R$ such that  $v=\nabla S$ on $N_{\rho}^c$, and
 $\psi:= \rho^{\frac 1 2}exp^{iS}$ is a  solution of the Schroedinger equation \eqref{Schroedinger2}
 in $L^2 (R^d\to \mathbb C)$.

\end{cor}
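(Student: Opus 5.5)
The plan is to reduce both directions of the Corollary to Lemma 1, using the Madelung decomposition of the Schroedinger equation on the nodeless region $N_\rho^c$.

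\textbf{Direct implication.} Let $\psi$ solve \eqref{Schroedinger1} and assume $(\rho,\nabla S)\in\Xi(\rho_o,\rho_1)$. On $N_\rho^c$ the function $\rho=|\psi|^2$ is strictly positive and, by h2), smooth. Locally on each connected component $N_i^c$ we can therefore write $\psi=\sqrt\rho\, e^{iS}$ with $S$ smooth, possibly only up to additive constants $2\pi k$ when the component is not simply connected; this does not affect $\nabla S$ or $\partial_t S$.

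We substitute this into $i\partial_t\psi+\frac12\nabla^2\psi=0$, divide by $\psi$, and separate real and imaginary parts.
\begin{itemize}
\item The imaginary part gives the continuity equation $\partial_t\rho+\nabla\cdot(\rho\nabla S)=0$ on $N_\rho^c$.
\item The real part gives the Hamilton--Jacobi--Madelung equation \eqref{HJM}.
\end{itemize}
Thus condition i) of Lemma 1 holds with $v=\nabla S$, and condition ii) holds as well. The sufficiency part of Lemma 1 then yields $DA^Q(\rho,\nabla S)(\delta\rho,\delta v)=0$ for all admissible variations.

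\textbf{Converse.} Suppose $DA^Q(\rho,v)=0$ on all variations satisfying a), b), c), and that $v\neq0$ on $N_\rho^c$. The necessity part of Lemma 1 provides $S$ on $N_\rho^c$ with $v=\nabla S$ and with \eqref{HJM} satisfied there. We define $\psi=\sqrt\rho\,e^{iS}$ on $N_\rho^c$ and $\psi=0$ on $N_\rho$.

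Running the Madelung computation backwards shows that $\psi$ solves the Schroedinger equation pointwise on $N_\rho^c$. Here we use the continuity equation satisfied by $(\rho,v)\in\mathcal P$, which holds classically on $N_\rho^c$ by smoothness. Moreover $\psi\in L^2$, since $\|\psi_t\|_2^2=\int\rho_t=1$, and its boundary moduli are $\rho_o$ and $\rho_1$.

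\textbf{Main obstacle.} It remains to extend the equation across the node set $N_\rho$, i.e.\ to show that $\psi$ is a distributional solution on all of $\R^d\times[0,1]$. We take a test function $\varphi$ and write $\int\psi\,(-i\partial_t\varphi+\frac12\nabla^2\varphi)$ as a sum over the components $N_i^c$. Integrating by parts on each component leaves boundary terms on $\partial N_i^c\subset N_\rho$; these involve $\psi$, $\partial_t\psi$ and $\nabla\psi$ paired with $\varphi$ and $\nabla\varphi$.

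The terms containing $\psi$ vanish because $\sqrt\rho\to0$ at $N_\rho$. For the gradient terms we write
\[
\nabla\psi=\Bigl(\tfrac12\nabla\ln\rho+i\nabla S\Bigr)\psi .
\]
The velocity contribution is controlled using h3), namely $\rho v\to0$ at the nodes. The osmotic contribution equals $\nabla\sqrt\rho\,e^{iS}$, which is bounded because $\rho$ is $C^\infty$; it still has to be shown that its flux through the node surfaces cancels or is negligible. Here we use h1): $N_\rho$ is a finite union of points, curves or hypersurfaces, so we can excise a tubular $\epsilon$-neighbourhood, apply the divergence theorem, and let $\epsilon\to0$.

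We expect that the phase $S$ may also need a consistent choice on the different sides of a node, for instance a sign jump $e^{i\pi}$ as in the Example. Finally, the uniqueness assumption on solutions of \eqref{Schroedinger1} with the given moduli identifies this $\psi$ with the unique $L^2$ solution, which completes the converse.
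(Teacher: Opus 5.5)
Your direct implication is the paper's argument: on $N_\rho^c$ you split the Schroedinger equation into the continuity and Hamilton--Jacobi--Madelung equations, then apply the sufficiency part of Lemma 1. One correction: do not rebuild $S$ from $\psi$ ``up to $2\pi k$''. Condition i) of Lemma 1 needs a single-valued $S$ on $N_\rho^c$, because it is used as the multiplier $\lambda=2S$. If the phase winds around a codimension-two node, the first variation along a divergence-free $\rho\,\delta v$ circulating around that node is nonzero. Use the $S$ given by the hypothesis $(\rho,\nabla S)\in\Xi(\rho_o,\rho_1)$.

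For the converse, your argument matches the paper's up to defining $\psi=\sqrt\rho\,e^{iS}$. The paper then concludes only from the facts that $N_\rho$ is Lebesgue-null and $\int|\psi|^2=1$. You are right that this is not enough, but your replacement leaves a genuine gap exactly at your ``main obstacle''.

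\textbf{The osmotic flux does not vanish.} At a simple nodal sheet, $\partial_n\sqrt\rho$ has one-sided limits $\pm a$ with $a\neq0$. Hence $\nabla^2\psi$ carries the singular term $a\,(e^{iS_+}+e^{iS_-})$ times surface measure on the sheet. Excising an $\epsilon$-tube does not remove it; it vanishes only if $e^{iS_+}=-e^{iS_-}$ at every point of the sheet.

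The paper's Example shows this is decisive. Choose the constant in $S$ on $\{x<0\}$ so that $e^{iS}$ is continuous across $x=0$. This gives $\tilde\psi=\mathrm{sgn}(x)\,\psi$, which has the same $(\rho,v)$, satisfies the Madelung equations on $N_\rho^c$ and has norm one. Yet $i\partial_t\tilde\psi+\frac12\partial_{xx}\tilde\psi=\partial_x\psi(0,t)\,\delta_{\{x=0\}}\neq0$.

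So the converse rests on a step you only state as an expectation. You must show that $S$ can be chosen, one additive constant per component of $N_\rho^c$, so that $e^{iS}$ changes sign across every nodal sheet. That requires three things:
\begin{enumerate}
\item A proof that $S_+-S_-$ is constant along each sheet, in space and time, using HJM, continuity and smoothness of $\rho$. This does not follow from the local equations alone. Formally, $\rho=x_1^2$ with $S_\pm=\pm kx_2-\frac12k^2t$ on $\{\pm x_1>0\}$ satisfies both equations, h3) and $v\neq0$, yet $S_+-S_-=2kx_2$.
\item A check that the required sign flips are consistent across all components.
\item A separate argument for nodes of higher vanishing order or of codimension at least two.
\end{enumerate}

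There are two further problems. The velocity part of the flux is $\sqrt\rho\,v\cdot n$, whereas h3) only gives $\rho v\to0$, so h3) does not control that term. And your final appeal to uniqueness cannot replace any of this: it identifies $\psi$ with the solution only once $\psi$ is known to be one.
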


\begin{proof}

Putting 
 
$$
\psi := \rho^{\frac 1 2}\exp ^{iS}
$$

\noindent we assume that equation \eqref {Schroedinger1}  is satisfied and that  $(\rho,\nabla S)\in \Xi _{\rho_o,\rho_1}$

Then, by Lemma 1,  for all $(x,t)\in N_\rho^c$, \eqref{Schroedinger1} $\rho$ and $S$ satisfy the system of equations

\begin{equation}\label{cont2}
\partial_t \rho(x,t) + \nabla\cdot (\rho(x,t) \nabla S(x,t) )= 0
\end{equation}

\begin{equation}\label{HJM2}
\partial_t S(x,t) + \frac 1 2 (\nabla S(x, t))^2 -
\frac 1 2 \frac {\nabla^2 \sqrt{\rho(x,t)}}{\sqrt{\rho(x,t)}} = 0
\end{equation}

\noindent with the condition $|\psi_o|^2 =\rho_o, \quad |\psi_1|^2 =\rho_1$. 

  The first  is the continuity equation and the second is the Hamilton Jacobi Madelung equation.

Putting $v(x,t):=\nabla S(x,t)$ for all $(x,t)\in N_\rho^c$, both the local gradient condition \eqref{Gradient} and the HJM equation \eqref{HJM2}  are satisfied. Then the sufficient condition in Lemma 1 holds, so that
 
 $$
D A^Q (\rho, \nabla S) (\delta\rho, \delta v) = 0
$$

\noindent for all $(\delta\rho, \delta v)$ satisfying a) and b).
\vskip 10mm

Conversely if \eqref{A2} is satisfied by some $(\rho,v)\in \Xi_{\rho_0,\rho_1}$ with $v$ different from zero, thus the necessity condition in Lemma 1 says that, for some  $S: N_\rho ^c\to \R$ , the current velocity $v$ must satisfy the equality

$$
v(x,t):= \nabla S(x,t) 
$$

\noindent for all $(x,t)\in  N_\rho^c$, and

$$
\partial_t S(x,t) + \frac 1 2 (\nabla S(x, t))^2 -
\frac 1 2 \frac {\nabla^2 \sqrt{\rho(x,t)}}{\sqrt{\rho(x,t)}} = 0  .
$$

\vskip 20mm

Introducing $\psi := \rho ^{\frac 1 2}\exp i S $, since $N_\rho$ has Lebesgue measure equal to zero and $\int _{R^d\times [0,1]} \psi \psi^* dx dt =1$, 
$\psi := \rho ^{\frac 1 2}\exp i S $  defines a solution in $L^2(\R^d\times [0,1], \mathbb C)$ of the Schroedinger equation \eqref{Schroedinger1}.

  \end{proof}

\vskip 20mm
\section { Digression: some  properties of the diffusions with characteristics belonging to $\Xi$}

The following theorem holds for any pair in $\mathcal P$ \cite{C1}.

\begin{thm}(Carlen)
 Let $\Omega$ be the set of continuous functions from $\R$
to $\mathbb{R}^d$,  $\mathcal F$ denote the associated Borel $\sigma$-algebra
and the filtration $(\mathcal F_t)_t$ be defined in the natural way. Let also $X$ denote the configuration process.

Let the pair $[\rho,v]$ belong to $\mathcal P$ and $b[\rho,v]$ be defined as 
$$
b[\rho,v]:= v+ \frac 1 2 \nabla \ln \rho
$$
\vskip 5mm

Then there exists a (unique)
probability measure $\mathbb P$ on
$(\Omega,\mathcal F,(\mathcal F_t)_t)$,
such that $X$ satisfies $\mathbb P$ a.s. the equality

\begin{equation}
X_t - X_0 - \int_0^t b[\rho,v] (X_s, s) ds =  W_t
\end{equation}

\noindent where $W$ is a standard Brownian Motion on $(\Omega,\mathcal F,(\mathcal F_t)_t),
\mathbb P)$.
\vskip 2mm

Moreover $X_t$ has a probability density equal to $\rho_t$. 

\end{thm}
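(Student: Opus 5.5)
Carlen's theorem asks for a probability measure on path space, not a strong solution. My plan is the classical three-part route: regularize the characteristic, take the smooth diffusions from part 1) of Section 2, pass to a limit using the finite action condition \eqref{finite energy}, and then identify the limit.

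\emph{Regularization.} I mollify $\rho$ in space, for instance by convolving with a Gaussian of variance $\varepsilon$, or by mixing $\rho$ with a small multiple of a smooth positive density. I define $\rho^\varepsilon$ and the flux $j^\varepsilon:=(\rho v)*\gamma_\varepsilon$, and set $v^\varepsilon:=j^\varepsilon/\rho^\varepsilon$. Then $(\rho^\varepsilon,v^\varepsilon)$ still satisfies the continuity equation, and $\rho^\varepsilon$ is smooth and strictly positive. Joint convexity of $(\rho,j)\mapsto |j|^2/\rho$ gives
\[
\int_0^1\!\!\int (v^\varepsilon)^2\rho^\varepsilon\,dx\,dt \le \int_0^1\!\!\int v^2\rho\,dx\,dt ,
\]
and the same holds for the osmotic term $\int |\nabla\rho|^2/\rho$. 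So the finite action bound is uniform in $\varepsilon$. (If the drift is not globally Lipschitz, a further truncation is needed; I would handle that by a localization argument.) Part 1) of Section 2 then yields diffusions $X^\varepsilon$ with drift $b^\varepsilon=b[\rho^\varepsilon,v^\varepsilon]$. Since $(\rho^\varepsilon,v^\varepsilon)$ solves the continuity equation, uniqueness for the Fokker--Planck equation shows that $X^\varepsilon_t$ has density $\rho^\varepsilon_t$.

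\emph{Tightness.} Because the time marginals are known, I can compute
\[
E\int_0^1 |b^\varepsilon(X^\varepsilon_s,s)|^2ds=\int_0^1\!\!\int |b^\varepsilon|^2\rho^\varepsilon\,dx\,dt \le 2\!\int_0^1\!\!\int \Big((v^\varepsilon)^2+\big(\tfrac12\nabla\ln\rho^\varepsilon\big)^2\Big)\rho^\varepsilon\,dx\,dt,
\]
which is bounded uniformly in $\varepsilon$. By Cauchy--Schwarz, the drift part $\int_0^t b^\varepsilon\,ds$ is then uniformly $\tfrac12$-Hölder in $L^2$. This together with tightness of $\rho^\varepsilon_0$ and the Brownian part gives tightness of the laws $\mathbb P^\varepsilon$ on $\Omega$. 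Prokhorov's theorem then gives a weakly convergent subsequence with limit $\mathbb P$, and $X_t$ has law $\rho_t$ under $\mathbb P$.

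\emph{Identification of the limit (the main obstacle).} I must show that $X_t-X_0-\int_0^t b(X_s,s)ds$ is a $\mathbb P$-Brownian motion. The difficulty is that $b$ is only in $L^2(\rho\,dx\,dt)$, so the functional $\omega\mapsto\int_0^t b(\omega_s,s)ds$ is not continuous and weak convergence alone does not pass it to the limit. My first attempt is to approximate $b$ in $L^2(\rho\,dx\,dt)$ by a bounded continuous $\beta$. For the continuous part $\int_0^t\beta(X_s,s)ds$, weak convergence applies directly. The error terms are controlled uniformly by the marginals:
\[
E^{\varepsilon}\Big|\int_0^t (b^\varepsilon-\beta)(X_s,s)\,ds\Big|\le \|b^\varepsilon-\beta\|_{L^2(\rho^\varepsilon)},
\]
and $b^\varepsilon\to b$ in this weighted sense by the mollification construction. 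This shows that $M_t$ is a continuous martingale under $\mathbb P$ with quadratic variation $t$ (the quadratic variation passes to the limit by uniform integrability), and Lévy's characterization then identifies it as Brownian motion.

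\emph{Uniqueness.} I would obtain uniqueness among Markovian solutions with this characteristic from uniqueness of the weak solution of the associated forward/backward Fokker--Planck system, using the finite entropy-production bound. I expect the identification step, and especially the convergence $b^\varepsilon\to b$ near the nodes, to be where the real work lies.
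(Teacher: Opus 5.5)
The paper does not prove this theorem; it quotes it from Carlen \cite{C1}. Your proposal therefore has to stand on its own.

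\textbf{Existence.} The existence half is a sound compactness argument. The step you single out as the main obstacle is in fact settled by the convexity you already invoke, provided you apply it to the total flux $\rho b=\rho v+\frac12\nabla\rho$ rather than to $v$ and the osmotic term separately. Since $b^\varepsilon\rho^\varepsilon=(\rho b)*\gamma_\varepsilon$, Cauchy--Schwarz gives pointwise $|b^\varepsilon|^2\rho^\varepsilon\le(|b|^2\rho)*\gamma_\varepsilon$, hence $\int_0^1\!\int|b^\varepsilon|^2\rho^\varepsilon\le\int_0^1\!\int|b|^2\rho$. The cross term $\int_0^1\!\int b^\varepsilon\cdot\beta\,\rho^\varepsilon=\int_0^1\!\int\rho b\cdot(\beta*\gamma_\varepsilon)$ and the term $\int_0^1\!\int|\beta|^2\rho^\varepsilon$ converge by dominated convergence. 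Therefore
\[
\limsup_{\varepsilon\to0}\|b^\varepsilon-\beta\|_{L^2(\rho^\varepsilon dx\,dt)}\le\|b-\beta\|_{L^2(\rho\,dx\,dt)},
\]
which is exactly what your identification step needs, with no analysis near the nodes. Testing with $e^{i\lambda\cdot(M_t-M_s)}\Phi$, for $\Phi$ bounded, continuous and $\mathcal F_s$-measurable, identifies the Brownian motion directly. This spares you the uniform-integrability discussion for the quadratic variation.

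\textbf{The regularized problem.} This step is asserted rather than proved. Part 1) of Section 2 requires a drift that is smooth in $t$ and of sublinear growth. Spatial mollification of a merely measurable-in-time $v$ gives neither: $v^\varepsilon=j^\varepsilon/\rho^\varepsilon$ is a ratio of mollified densities and need not grow sublinearly. You must mollify in time as well. Localization then only yields a solution up to an explosion time. Non-explosion and the identity $\mathrm{law}(X^\varepsilon_t)=\rho^\varepsilon_t\,dx$ must be derived from the finite energy of $(\rho^\varepsilon,v^\varepsilon)$, for instance through a uniqueness theorem for probability solutions of the Fokker--Planck equation with locally bounded drift in $L^1(\rho^\varepsilon dx\,dt)$. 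Fokker--Planck uniqueness is not automatic for unbounded coefficients, and this smooth case is a genuine part of Carlen's theorem.

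\textbf{Uniqueness (the genuine gap).} Your single sentence does not prove it, and the route it points to would not deliver it as stated. Uniqueness for the forward equation controls one-time marginals, not the law on path space. Moreover, forward uniqueness for a drift that is only in $L^2(\rho\,dx\,dt)$ and singular at the nodes is itself the hard question.

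Use Girsanov instead. First fix the class, which the statement leaves implicit. Since $b$ is defined only $\rho\,dx\,dt$-a.e., uniqueness can only be claimed among solutions with $X_0\sim\rho_0$ for which $\int_0^1|b(X_s,s)|^2ds<\infty$ a.s., for a fixed Borel version of $b$. By the finite action condition, this class contains every solution with marginals $\rho_t$.

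For such $\mathbb P$, put $\tau_n:=\inf\{t:\int_0^t|b(X_s,s)|^2ds\ge n\}\wedge1$. Novikov's criterion applies up to $\tau_n$, and the tilted measure makes $X_{\cdot\wedge\tau_n}-X_0$ a stopped Brownian motion independent of $X_0$. One obtains, for every $\Gamma\in\mathcal F_1$,
\[
\mathbb P\big(\Gamma\cap\{\tau_n=1\}\big)=\mathbb E^{\mathbb W_{\rho_0}}\Big[\exp\Big(\int_0^{\tau_n}b\,dX-\tfrac12\int_0^{\tau_n}|b|^2ds\Big);\ \Gamma\cap\{\tau_n=1\}\Big],
\]
where $\mathbb W_{\rho_0}$ is Wiener measure with initial law $\rho_0$. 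The right-hand side does not involve $\mathbb P$, and $\mathbb P(\tau_n=1)\to1$, so $\mathbb P$ is unique in that class. This is the standard uniqueness in law for additive-noise SDEs with pathwise square-integrable drift. It gives uniqueness among all solutions in the class, not only Markovian ones, and needs no backward equation.
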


\vskip 10mm

 In order to study the subset of Carlen's processes that we are considering, it is important  to recall the property of non attainability of nodes for Nelson's diffusions.
This problem was firstly studied  by Nelson \cite{Nelson2} and later revisited by other authors.We refer the reader to the general result proved  by Zeng \cite{Zeng}.

 \begin{prop}(Non-attainability of the zeroes of the density)
 
  Let $(\Omega,\mathcal F,(\mathcal F_t)_t)$  be defined as in Theorem 1 and let $X$ be the configuration process.
 Let  also $\mathbb P$ be a probability measure on $(\Omega,\mathcal F,(\mathcal F_t)_t)$   and $W$ a standard Brownian Motion on $(\Omega,\mathcal F,(\mathcal F_t)_t,\mathbb P)$.

Let $b:=b[\rho,v]$ be such that 
$(\rho,v)$ belongs to $\Xi$ and let $\xi$  satisfy the stochastic differential equation

$$
\xi_t = X_o +\int _o^t b(\xi_t,t)dt + W(t)
$$

Then
 
$$\mathbb P \{\omega \in \Omega : (\omega(t),t)\in N_{\rho}^c\quad  \forall t\in [0,1]   \}=1
 $$

\end{prop}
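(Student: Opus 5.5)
The plan is the classical Nelson--Carlen argument: compute the quantity $\ln \rho(\xi_t,t)$ along the paths with It\^o's formula, and show that the finite action condition \eqref{finite energy} keeps its expectation finite. Hitting $N_\rho$ would force $\ln\rho(\xi_t,t)=-\infty$, which this rules out.

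Fix $\epsilon>0$ and put $\tau_\epsilon := \inf\{t\in[0,1]: \rho(\xi_t,t)\le \epsilon\}$, with $\tau_\epsilon=1$ if the set is empty. Let $\tau:=\lim_{\epsilon\to 0}\tau_\epsilon$. Since $\rho$ is continuous by h2), the event that the path meets $N_\rho$ is contained in $\{\tau_\epsilon<1 \text{ for all } \epsilon\}$ up to the terminal time. Before $\tau_\epsilon$ the path lives in the open set $\{\rho>\epsilon\}\subset N_\rho^c$. There $b$ is smooth by h2), so after a further localization in $|x|$ It\^o's formula applies to $R:=\frac12\ln\rho$:
$$
R(\xi_{t\wedge\tau_\epsilon},t\wedge\tau_\epsilon)-R(\xi_0,0)=\int_0^{t\wedge\tau_\epsilon}\Bigl(\partial_t R + b\cdot\nabla R+\tfrac12\nabla^2R\Bigr)(\xi_s,s)\,ds+\int_0^{t\wedge\tau_\epsilon}\nabla R\cdot dW_s .
$$
Next I would use the continuity equation $\partial_t R=-\nabla\cdot v-2v\cdot\nabla R$ (up to the paper's sign convention), with $b=v+\nabla R$. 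The drift term then becomes $-\nabla\cdot v - v\cdot\nabla R+|\nabla R|^2+\tfrac12\nabla^2 R$.

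The key step is to bound $\mathbb E|R(\xi_{t\wedge\tau_\epsilon},t\wedge\tau_\epsilon)|$ uniformly in $\epsilon$.
\begin{itemize}
\item The quadratic terms $v\cdot\nabla R$ and $|\nabla R|^2$ are controlled by $\mathbb E\int_0^1(|v|^2+|\nabla R|^2)(\xi_s,s)\,ds$. Because $\xi_s$ has density $\rho_s$ (Theorem 1, or by uniqueness of the Fokker--Planck solution), this equals the action in \eqref{finite energy}, which is finite.
\item The stochastic integral is handled by the It\^o isometry with the same bound.
\item The second-order terms $\nabla\cdot v$ and $\nabla^2R$ are the delicate part. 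Integrated against $\rho$ they become $\int \nabla\cdot(\rho v)\,dx$-type and $\int\nabla\cdot(\rho\nabla R)\,dx$-type expressions. These vanish by the divergence theorem on each component of $N_\rho^c$, since the flux $\rho v$ tends to $0$ at $N_\rho$ by h3), and $\rho\nabla R=\frac12\nabla\rho$ vanishes there because $\rho\ge0$ is smooth and attains its minimum on $N_\rho$.
\end{itemize}
This is the step I expect to be the main obstacle. The expectations are taken only up to the stopping time, so the density identity is not directly available. I would instead bound $\mathbb E\int_0^{t\wedge\tau_\epsilon}|g|(\xi_s,s)ds\le\int_0^1\!\int|g|\rho\,dx\,ds$. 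This requires showing that $\rho\nabla\cdot v$ and $\rho\nabla^2R$ are integrable near the nodes. I would establish this from the smoothness and the geometry of $N_\rho$ in h1), using the local behaviour $\rho\sim \mathrm{dist}^{2k}$ near regular curves or surfaces. Alternatively, I would follow Zeng's argument \cite{Zeng}, which handles this with an entropy-type functional $\mathbb E[\ln\rho]$ and needs only \eqref{finite energy}.

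Given the uniform bound $\sup_\epsilon \mathbb E|\ln\rho(\xi_{\tau_\epsilon},\tau_\epsilon)|<\infty$, conclude by Fatou. On the event that the path hits $N_\rho$, $\ln\rho(\xi_{\tau_\epsilon},\tau_\epsilon)=\ln\epsilon\to-\infty$. Hence $|\ln\epsilon|\,\mathbb P(\tau_\epsilon<1)\le C$, so $\mathbb P(\tau<1)=0$. The endpoint $t=1$ is included since $\rho_1$ is a density and the bound holds up to time $1$ inclusive. Finally, $\xi_0=X_0$ has density $\rho_0$, so $\rho(\xi_0,0)>0$ a.s., which gives the claim.
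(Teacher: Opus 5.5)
Your argument is correct in outline, but it does not follow the paper. The paper gives no proof of Proposition 1; it simply cites the general non-attainability result of Zeng (after Nelson) for Nelson--Carlen diffusions, which is not tailored to $\Xi$. You instead give a self-contained proof: It\^o's formula for $R=\frac12\ln\rho$, stopped at $\tau_\epsilon$, which is the classical Nelson--Carlen mechanism specialised to $\Xi$. Your route is elementary and uses only the finite action condition and the smoothness in h2), but it gives up the generality of the cited result. Once you bound the stopped drift in absolute value, you never integrate by parts across $N_\rho$, so h3) and your divergence-theorem bullet are not needed.

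Some points need repair.

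\textbf{The key integrability step.} The step you flag as the main obstacle cannot be done via $\rho\sim\mathrm{dist}^{2k}$. Hypotheses h1)--h2) give no such asymptotics: a smooth $\rho$ with finite action may vanish to infinite order on $N_\rho$. The step is easy, however, from the pointwise identities on $N_\rho^c$
\[
\rho\,\nabla\cdot v=-\partial_t\rho-2\rho\,v\cdot\nabla R,\qquad \rho\,\nabla^2R=\tfrac12\nabla^2\rho-2\rho|\nabla R|^2 .
\]
The first is the continuity equation, which holds classically off the nodes by h2). Hence $\rho|\nabla\cdot v|$ and $\rho|\nabla^2R|$ are integrable on bounded subsets of $\R^d\times[0,1]$, by smoothness of $\rho$ and the finite action condition.

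\textbf{Localization.} Neither integrability of $\partial_t\rho$ and $\nabla^2\rho$ at spatial infinity nor $\mathbb E|\ln\rho_0(\xi_0)|<\infty$ is assumed. So prove your uniform bound up to $\tau_\epsilon\wedge\sigma_M\wedge 1$, where $\sigma_M$ is the exit time from the ball of radius $M$, and on the $\mathcal F_0$-event $A_\delta=\{\rho_0(\xi_0)\ge\delta\}$. There $|\nabla R|$ is bounded, so the stochastic integral has mean zero. You then get $\mathbb P(A_\delta\cap\{\tau_\epsilon<\sigma_M\wedge1\})\le C_{M,\delta}/|\ln\epsilon|$. Let $\epsilon\to0$, then $M\to\infty$ (by non-explosion), then $\delta\to0$.

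\textbf{A harmless slip.} In fact $\partial_tR=-\frac12\nabla\cdot v-v\cdot\nabla R$. The $v\cdot\nabla R$ terms therefore cancel, and the drift of $R$ is $-\frac12\nabla\cdot v+|\nabla R|^2+\frac12\nabla^2R$.
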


\vskip 5mm

In the following we will denote by SDE $(X_o,b,\mathbb I )$  the stochastic differential equation with initial condition $X_o$, drift field $b$ and diffusion coefficient equal to the identity matrix in $\R^d$.  We prove the  existence of a  strong solution in the case where $b$ is equal to $b[\rho,v]$ whith $(\rho,v)$ belonging to
 $\Xi$.

\begin{prop}
Let $b:= b[\rho,v]$  with $(\rho,v)\in \Xi$.
 Then the S.D.E. $(X_o,b,\mathbb I)$ has a strong solution.
 \end{prop}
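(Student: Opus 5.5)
The plan is a localization argument: the drift is smooth away from the nodes, and Proposition 1 guarantees that a solution never reaches them. On $N_\rho^c$ the drift $b=v+\frac12\nabla\ln\rho$ is $C^\infty$, by h2) of Definition 2 and because $\ln\rho$ is smooth where $\rho>0$. Hence $b$ is locally Lipschitz on the open set $N_\rho^c\subset\R^d\times[0,1]$. It may blow up near $N_\rho$, as in the Example where $\frac12\nabla\ln\rho\sim 1/x$, and also as $|x|\to\infty$.

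To exploit local regularity, I will exhaust $N_\rho^c$ by an increasing sequence of sets
$$K_n:=\{(x,t): |x|\le n,\ \rho(x,t)\ge 1/n\}.$$
Each $K_n$ is compact and contained in $N_\rho^c$, and $\bigcup_n K_n=N_\rho^c$. For each $n$, let $b_n$ be a globally Lipschitz, bounded field on $\R^d\times[0,1]$ that agrees with $b$ on $K_n$, obtained by multiplying with a smooth cutoff. On the given probability space, carrying $X_o$ and an independent Brownian motion $W$, the classical Itô theorem then yields a unique strong solution $\xi^n$ of SDE$(X_o,b_n,\mathbb I)$. Define the stopping times $\tau_n:=\inf\{t:(\xi^n_t,t)\notin K_n\}$. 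Pathwise uniqueness for locally Lipschitz coefficients shows that $\xi^n=\xi^m$ on $[0,\tau_n\wedge\tau_m]$. So the $\tau_n$ increase, and the processes patch together into a single adapted process $\xi$ on $[0,\tau_\infty)$, where $\tau_\infty=\lim_n\tau_n$. This $\xi$ solves SDE$(X_o,b,\mathbb I)$ up to $\tau_\infty$ and is adapted to the filtration generated by $X_o$ and $W$, so it is strong.

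The main obstacle is to show that $\tau_\infty>1$ almost surely, that is, that the process neither explodes nor approaches $N_\rho$ before time $1$. Here I will use Carlen's Theorem 1 and Proposition 1. By Theorem 1 there is a weak solution with law $\mathbb P$ whose marginals are $\rho_t$. By Proposition 1 its paths stay in $N_\rho^c$ for all $t\in[0,1]$; since the paths are continuous and $[0,1]$ is compact, each path lies in some $K_n$, using that $\rho$ is continuous, so that $\rho\ge1/n$ on a compact subset of $N_\rho^c$.

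The remaining step is to transfer this to the strong solution. Up to time $\tau_n$, both the weak solution and $\xi$ solve the SDE with the Lipschitz drift $b_n$. By Yamada--Watanabe (pathwise uniqueness together with weak existence) the laws of the stopped processes coincide, so $P(\tau_n\le 1)=\mathbb P(\text{path leaves }K_n\text{ before }1)$. The right-hand side tends to $0$ as $n\to\infty$, hence $\tau_\infty>1$ a.s. I expect the delicate point to be making this identification of laws rigorous. An alternative route is to apply Yamada--Watanabe globally: combine local pathwise uniqueness with Proposition 1, which ensures that any weak solution stays in $N_\rho^c$. This gives pathwise uniqueness on $[0,1]$, and together with Carlen's weak existence it yields the strong solution directly.
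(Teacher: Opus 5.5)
Your proposal is correct and follows essentially the paper's route. Both truncate the drift on an exhaustion by compacts and take the strong solutions of the Lipschitz truncated SDEs. Both identify Carlen's weak solution with the truncated solution up to the exit time, transfer the exit probability by uniqueness in law, and invoke Proposition 1 to send that probability to zero. Your stopping-time bookkeeping and your exhaustion of $N_\rho^c$ (rather than of all of $\R^d\times[0,1]$, as the paper literally writes for Proposition 2) make explicit what the paper's $\Omega_j$, $\Omega'_{jj}$ argument actually needs.
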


 \begin{proof}
 
 Let $(X,\mathbb P,W)$ be Carlen's weak solution of the S.D.E. $(X_o,b,\mathbb I)$, so that
 
 \begin{equation}\label{uno}
 X(\omega,t)=X_o(\omega) + \int _0^1 b(X(\omega,s),s) ds +W(\omega,t), \mathbb P a.s.
 \end{equation}

Let  $(D_j)_j$ be an increasing sequence of compact subset of $\R^d \times [0,1]$ converging to $\R^d \times [0,1]$ and $ ( \Phi_j )_j $ be defined by

\begin{equation}\label{Phi}
\Phi_j (x,t) = \begin{cases}
                  1     \qquad     (x,t) \in D_j \\
                  0     \qquad     (x,t) \in \underset{m=j+2}{U^\infty} (D_m \setminus D_{m-1})
               \end{cases}
\end{equation}

 Then $b\Phi_j$ is of class $C_b^\infty$ and

$$
(b\Phi_j)(x,t) = b(x,t)\quad \forall (x,t)\in D_j
$$

  The S.D.E. $(X_o,b\Phi_j,\mathbb I)$ has a strong solution and we can define the "regularized process" $X _j$ by

\begin{equation}\label{due}
X_j(\omega,t)=X_o(\omega) + \int _0^t (b\Phi_j)(X_j(\omega,s),s) ds +W(\omega,t), \mathbb P a.s.
\end{equation}

Defining

$$
\Omega_j :=\{ \omega :(X(\omega,t),t)\in D_j\quad \forall t\in [0,1]\}
$$

\noindent we have from \eqref{uno} and the definition of $\Phi_j$, for all $\bar\omega$ in $\Omega_j$

\begin{equation}\label{due'}
X(\bar \omega,t)=X_o(\bar \omega) + \int _0^t (b\Phi_j)(X(\bar\omega,s),s)ds +W(\bar\omega,t)
\end{equation}

Then for the uniqueness of the path-wise solution of equation \eqref{due} we have 

$$
X(\bar\omega,t)=X_j(\bar \omega,t), \forall t\in[0,1], \forall \bar\omega\in \Omega _j
$$

\noindent so that

\begin{equation}\label{contiene}
\{ \omega :(X_j(\omega,t),t)\in D_j \quad  \forall t\in [0,1]\}\supseteq \Omega_j
\end{equation}
\vskip 10mm

Let now $(\mathbb P',W')$ be arbitrarily chosen on $(\Omega,\mathcal F,(\mathcal F_t)_t)$ .
Since the S.D.E. $(X_o,b\Phi_j,\mathbb I)$ has strong solution, there exists $\xi_j$ satisfying the equality

\begin{equation}\label{tre}
\xi_j(\omega,t)=X_o(\omega) + \int _0^1 (b\Phi_j)(\xi_j(\omega,s),s) ds +W'(\omega,t), \mathbb P' a.s.
\end{equation}

\noindent and, by unicity in law of the solutions of the SDE $(X_o,b\phi_j, \mathbb I)$, $\xi_j$ has the same law as  $ X_j$.

\vskip 5mm

Introducing

$$
\Omega '_{jj} := \{\omega : (\xi_j(\omega,t),t)\in D_j, \quad \forall t \in [0,1] \}
$$

\noindent we define 

$$
\xi(\omega,t):=\xi_j(\omega,t), \quad \forall \omega \in \Omega'_{jj}
$$

By the localization theorem applied to the process $(\xi_j(\omega,t),t)$,we have

$$
\xi_j(\omega,t)=\xi_{j'}(\omega,t), \quad \forall \omega \in \Omega'_{jj},\quad j'>j,
$$

Then, $\mathbb P'a.s.$ for $\omega'\in \Omega'_{jj}$ and for all $j\in \mathbb N $ we define

$$
\xi(\omega',t)= \xi_j(\omega',t)=X_o(\omega') + \int _0^1 (b\Phi_j)(\xi_j(\omega',s),s) ds +W'(\omega',t)=
$$

$$
=X_o(\omega') + \int _0^1 b(\xi_j(\omega',s),s) ds +W'(\omega',t)=
$$

Since $\xi_j$ and $X_j$ are equal in law and recalling \eqref{contiene}

$$
\mathbb P'(\Omega'_{jj}) \equiv \{\omega : (\xi_j(\omega,t),t)\in D_j, \quad \forall t \in [0,1] \}=
$$

$$
=\mathbb P \{\omega : (X_j(\omega,t),t)\in D_j, \quad \forall t \in [0,1] \}\geq \mathbb P (\Omega_j)
$$

Finally, recalling the property of the non attainability of nodes (Prop 1),

$$
\lim _{j\to \infty} \mathbb P ({\Omega_j})=\mathbb P\{ \omega : (X(\omega,t),t) \in N_{\rho}^c \forall t\in [0,1]\}=1
$$

\noindent one also has

$$
\lim _{j\to \infty} \mathbb P'( {\Omega'_{jj}}) =1
$$

Concluding we can write
$$
\xi(\omega,t)=X_o(\omega) + \int _0^1 b(\xi(\omega,s),s) ds +W'(\omega,t), \quad \mathbb P' a.s.
$$

\end{proof}

\vskip 10mm

 \begin{prop} 
Let $(\rho_y,v_y)$ belong to $\Xi$ for all $y \in  [0,1]$ and  have uniformly bounded derivative with respect to $y$.

 Then the process $X(y, t) := q_t^{b[\rho_y,v_y]}$ depends continuously on $y$, can be differentiated with respect to $y$ at all orders and all derivatives have finite momenta.
   
\end{prop}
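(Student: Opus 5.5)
We reduce the statement to the classical theory of differentiability of strong solutions with respect to a parameter, using Proposition 2 to put every process on one probability space and Proposition 1 to stay away from the nodes. All processes $q^{b[\rho_y,v_y]}$, $y\in[0,1]$, are realized on the same probability space $(\Omega,\mathcal F,(\mathcal F_t)_t,\mathbb P')$. They share the same initial datum $X_o$ and the same Brownian motion $W'$, given by Proposition 2. Write $b_y:=b[\rho_y,v_y]$. By h2), $b_y$ is $C^\infty$ on $N^c_{\rho_y}$. We use the localizing compacts $D_j$ and cut-offs $\Phi_j$ of the proof of Proposition 2, chosen inside $N^c_{\rho_y}$ uniformly in $y$; this requires the node sets to move regularly in $y$, which we take from the uniform bound on $\partial_y(\rho_y,v_y)$. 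The truncated drifts $(x,t,y)\mapsto (b_y\Phi_j)(x,t)$ are then $C^\infty_b$ jointly in $(x,y)$.

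First step: for fixed $j$, consider $X_j(y,t)$, the strong solution of SDE$(X_o,b_y\Phi_j,\mathbb I)$. Its drift is smooth and bounded with bounded derivatives of all orders in $(x,y)$. The standard theorem on differentiable dependence on parameters (Kunita, or Gikhman--Skorokhod) then gives the following. There is a version of $y\mapsto X_j(y,t)$ that is continuous, indeed $C^\infty$. Each derivative $\partial_y^k X_j$ solves the formally differentiated linear SDE, for instance
$$\partial_y X_j(y,t)=\int_0^t\big[\nabla(b_y\Phi_j)(X_j(y,s),s)\,\partial_yX_j(y,s)+(\partial_y(b_y\Phi_j))(X_j(y,s),s)\big]ds .$$
Gronwall's lemma then yields moment bounds $E|\partial^k_yX_j|^p\le C_{j,k,p}$.

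Second step: remove the cut-off. As in the proof of Proposition 2, on
$$\Omega'_j:=\{\omega:(X_j(y,t),t)\in D_j\ \forall t\in[0,1],\ \forall y\in[0,1]\}$$
we have $X(y,t)=X_j(y,t)$ by pathwise uniqueness, so all $y$-derivatives coincide there. By Proposition 1 and the monotonicity $\Omega'_j\uparrow$, we get $\mathbb P'(\bigcup_j\Omega'_j)=1$. Hence $X(y,\cdot)$ is a.s. continuous and infinitely differentiable in $y$. To pass from a fixed $y$ to all $y$ simultaneously, we use the continuity in $y$ together with a compactness argument on $[0,1]$.

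Main obstacle: finiteness of the moments of the derivatives, because the constants $C_{j,k,p}$ blow up with $j$ as $\nabla b_y$ is unbounded near the nodes. Our first attempt is to bound $E|\partial_y X|^p$ by writing $\nabla b_y=\nabla v_y+\frac12\nabla^2\ln\rho_y$. We would then estimate $\int_0^1 E|\nabla b_y(X_s,s)|^q ds=\int\int|\nabla b_y|^q\rho_y\,dx\,dt$ through the explicit structure of $\Xi$. Near a regular node $\rho_y\sim \mathrm{dist}^2$, so $|\nabla^2\ln\rho|\sim\mathrm{dist}^{-2}$, and this integral is finite only for small $q$. For this reason we would rather control the derivative process through the linear equation via a stochastic-exponential/Girsanov-type argument, rather than through a pathwise Gronwall bound. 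The finite-action condition and h3) give us the integrability of the relevant exponential functionals. If this fails, we would settle for moments along the localized sequence and conclude by monotone convergence.
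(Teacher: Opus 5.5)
Your first two steps are the paper's proof: the cut-offs $\Phi_j$ of Proposition 2, the smooth-dependence theorem for SDE$(X_o,b_y\Phi_j,\mathbb I)$ at each fixed $j$, and the identification with $X(y,\cdot)$ as at the end of Proposition 2. The paper stops there and asserts that the limit ``has the desired properties''. You correctly notice that this does not yield moment bounds, but you do not supply them, so there is a genuine gap exactly at your ``main obstacle''.

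Identification on $\Omega'_j$ transfers regularity in $y$, not integrability, and the constants $C_{j,k,p}$ blow up with $j$. Neither of your remedies closes the gap.

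\emph{The Gronwall/stochastic-exponential route.} $\partial_yX$ solves the pathwise linear equation $\frac{d}{dt}\partial_yX=\nabla b_y(X,t)\,\partial_yX+(\partial_yb_y)(X,t)$. Any such estimate therefore needs $\mathbb E\exp\bigl(p\int_0^1\Lambda_y(X_s,s)\,ds\bigr)<\infty$, where $\Lambda_y$ is the top eigenvalue of the symmetric part of $\nabla b_y$. Finite action and h3) only control $\int\!\!\int|b_y|^2\rho_y\,dx\,dt$, which is what Lemma 2 and Girsanov use; they say nothing about such exponentials. Take a vortex node in $d=2$: the free evolution of $(x_1+ix_2)e^{-|x|^2/4}$, with $\rho\sim|x|^2$, $v\sim\hat\theta/|x|$ and $(\rho,\nabla S)\in\Xi$. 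Near the node $\nabla b$ has eigenvalues $\approx\pm\sqrt2/|x|^2$. The radial part of $X$ behaves like a four-dimensional Bessel process, and $\mathbb E\exp(\lambda\int_0^1|X_s|^{-2}ds)=\infty$ for $\lambda>1/2$ (the Hardy threshold). So this route fails already for $p=1$.

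\emph{The Fatou fallback.} It only gives $\mathbb E|\partial^k_yX|^p\le\liminf_j\mathbb E|\partial^k_yX_j|^p$, which is useless without $\sup_jC_{j,k,p}<\infty$. That is precisely the missing estimate, and there is no monotonicity in $j$ to support monotone convergence.

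To close the argument you need either an extra structural hypothesis or a weaker conclusion. The hypothesis would bound $\nabla b_y$ from above near the nodes, with, e.g., $\partial_yb_y$ vanishing there, as for the compactly supported variations of Theorem 2.

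Second, the localization simultaneous in $y$ is not justified. Proposition 1 gives, for each fixed $y$, that $X(y,\cdot)$ a.s.\ avoids $N_{\rho_y}$. Your $\Omega'_j$ instead requires all paths $X(y,\cdot)$, $y\in[0,1]$, to stay in one compact subset of $\bigcap_yN^c_{\rho_y}$. This does not follow from countably many applications of Proposition 1, and appealing to continuity in $y$ of $X$ is circular. Yet you need it even for differentiability at a single $y_0$, since $X=X_j$ must hold on a neighbourhood of $y_0$.

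Nor does a bound on $\partial_y\rho_y$ make the nodes move regularly. The family $\rho_y\propto(x^2+1-y)e^{-x^2/2}$, $v_y=0$, lies in $\Xi$ with bounded $y$-derivative, but it acquires a node at $y=1$, while for $y<1$ the paths cross $x=0$ freely. In Theorem 2 this is harmless, because $N_{\rho+yg}=N_\rho$ for $y\in[0,1)$. In general, however, it must be assumed, not derived.
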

\begin{proof}
We consider firstly the case where no nodes are present. Then, if $(\rho_y,v_y)$ have uniformly bounded derivative with respect to $y$, then one can see that $b[\rho_y,v_y]$ is by construction of class $C_b^\infty$ and the derivatives with respect to $y$ exist and are uniformly bounded.
Then one can apply the theorem of smooth dependence on parameters for solutions of a S.D.E. so that the process
  $X(y, t) := q_t^{b[\rho_y,v_y]}$ depends continuously on $y$, can be differentiated with respect to $y$ at all orders and all derivatives have finite momenta.
 
 In the case where  $(\rho_y,v_y)$ is a generic element of $\Xi$ for all $y$, we consider the sequence
 of functions $\Phi_j$ as defined in the proof of proposition 2 and introduce the process
 
 $$
 X^j(y,t):= q_t^{\Phi_j b[\rho_j,v_j]}
 $$
 
 Then we can exploit again the theorem on smooth dependence on parameters for all $j\in \mathbb N$.  Following the same argument as in the end of the proof of proposition 2, the sequence  $\{X^j(y,t)\}_j$ defines asymptotically a process ${X(y,t)}$ with the desired properties. 
 \end{proof}

\vskip10mm
Finally we check that Nelson's celebrated renormalisation formula \cite{Nelson*} holds for diffusions with characteristics in $\Xi$.
\vskip 10mm
\begin{lem}

Let $b\equiv b[\rho, v], (\rho,v)\in \Xi$.

Then

\begin{equation}
\lim_{n\rightarrow\infty} n \mathbb{E} \sum_{i=0}^{n-1} (\triangle q_i^b)^2 - nd =
\int_0^1 \int_{\mathbb{R}^d}
 (v^2 - (\frac 1 2 \nabla ln \rho)^2)\rho dx dt <\infty  
\end{equation}

\end{lem}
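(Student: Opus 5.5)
The plan is to establish the formula on the discrete grid $t_i=i/n$, with $\triangle q_i^b:=q^b_{t_{i+1}}-q^b_{t_i}$. Here $q^b$ denotes the strong solution of SDE$(X_o,b,\mathbb I)$ given by Proposition 2, and Proposition 1 guarantees that it stays in $N_\rho^c$ almost surely. On $N_\rho^c$ the drift $b=v+\frac12\nabla\ln\rho$ is $C^\infty$ by h2).

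Write $\triangle q_i^b=\int_{t_i}^{t_{i+1}}b(q_s,s)\,ds+\triangle W_i$ and expand the square. Then $n\mathbb E(\triangle W_i)^2=d$ cancels against the $nd$ term, and
\begin{equation*}
n\mathbb E\sum_i(\triangle q_i^b)^2-nd=n\sum_i\mathbb E\Big(\int_{t_i}^{t_{i+1}}b\,ds\Big)^2+2n\sum_i\mathbb E\Big[\triangle W_i\int_{t_i}^{t_{i+1}}b(q_s,s)\,ds\Big].
\end{equation*}
By Cauchy--Schwarz the first term is bounded by $\int_0^1\mathbb E\, b^2\,dt$, which is finite by the finite action condition, and it is $O(1/n)$ in the limit. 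More precisely it tends to $0$: for each $i$ it is at most $\frac1n\int_{t_i}^{t_{i+1}}\mathbb E b^2\,ds\cdot n$ times $\frac1n$, and summing gives $\frac1n\int_0^1\mathbb E b^2$.

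The whole contribution therefore comes from the cross term. For it, I will write $\triangle W_i=\triangle q_i-\int b$, or better, use the time reversal of the diffusion. The reversed process is again a diffusion, with backward drift $b_*=v-\frac12\nabla\ln\rho$. This follows from Nelson's duality relation, which for characteristics in $\Xi$ is obtained by integrating by parts on each connected component $N_i^c$. The boundary terms on $N_\rho$ vanish thanks to h3) and the fact that $\rho\nabla\ln\rho=\nabla\rho$ is continuous and vanishes at the nodes. Using the forward and backward Itô decompositions of $\triangle q_i$, one finds
\begin{equation*}
n\mathbb E\Big[\triangle W_i\int_{t_i}^{t_{i+1}} b\,ds\Big]\approx \frac1n\,\mathbb E\big[(b-b_*)\cdot b\big](q_{t_i},t_i)\cdot\tfrac12 .
\end{equation*}
This is the mechanism behind Nelson's formula, and summing gives
\begin{equation*}
\int_0^1\mathbb E\big[b\cdot b_*\big]\,dt=\int_0^1\!\!\int(v^2-\tfrac14(\nabla\ln\rho)^2)\rho\,dx\,dt .
\end{equation*}
Concretely, the approach is $n\mathbb E\sum(\triangle q_i)^2-nd\to\int\mathbb E(b\cdot b_*)$, via the standard identity $\mathbb E(\triangle q_i)^2=\frac1n d+\frac1{n^2}\mathbb E(b\cdot b_*)+o(n^{-2})$. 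That identity is obtained by averaging the forward expansion $\mathbb E[\triangle q_i\cdot b(q_{t_i})]/n$ with the backward expansion $\mathbb E[\triangle q_i\cdot b_*(q_{t_{i+1}})]/n$.

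The main obstacle is justifying the $o(1)$ error terms uniformly near the nodes, where $b$ is unbounded, for instance like $1/x$ in the Example. I would first carry out the computation for the localized drifts $b\Phi_j$ from the proof of Proposition 2, for which the expansion is classical. I would then pass to $j\to\infty$ using $\mathbb P(\Omega_j)\to1$ and dominated convergence, with the domination supplied by the finite action bound $\int\mathbb E(b^2+b_*^2)<\infty$. Finiteness of the right-hand side is immediate from the finite action condition.
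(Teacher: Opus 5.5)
The gap is in how you evaluate the two pieces of your split $n\mathbb E\sum_i(\triangle q_i^b)^2-nd=n\sum_i\mathbb E(\int_{t_i}^{t_{i+1}}b\,ds)^2+2n\sum_i\mathbb E[\triangle W_i\cdot\int_{t_i}^{t_{i+1}}b\,ds]$. The split itself is fine and is essentially the one behind the paper's expansion.

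\textbf{The drift piece does not tend to $0$.} Since $\int_{t_i}^{t_{i+1}}b(q_s,s)\,ds\approx\frac1n b(q_{t_i},t_i)$, this piece is approximately the Riemann sum $\frac1n\sum_i\mathbb E\,b^2(q_{t_i},t_i)$. That sum converges to $\int_0^1\mathbb E\,b^2(q_t,t)\,dt$. Your Cauchy--Schwarz bound is sharp, and the extra factor $\frac1n$ in your refinement is spurious.

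\textbf{The cross piece has the wrong sign.} Expand $b(q_s,s)\approx b(q_{t_i},t_i)+\nabla b(q_{t_i},t_i)(W_s-W_{t_i})$ and use that the increments after $t_i$ are independent of $\mathcal F_{t_i}$. This gives $\mathbb E[\triangle W_i\cdot\int_{t_i}^{t_{i+1}}b\,ds]\approx\frac1{2n^2}\mathbb E\,(\nabla\cdot b)(q_{t_i},t_i)$. Since $\int\rho\,\nabla\cdot b\,dx=-\int b\cdot\nabla\rho\,dx=-\int b\cdot(b-b_*)\rho\,dx$, this is the negative of your displayed approximation.

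\textbf{Your conclusion does not follow from your own steps.} Inserting your display into your split and summing gives $\int_0^1\mathbb E[(b-b_*)\cdot b]\,dt=\int_0^1\mathbb E\,b^2\,dt-\int_0^1\mathbb E[b\cdot b_*]\,dt$, not $\int_0^1\mathbb E[b\cdot b_*]\,dt$. The correct bookkeeping needs both pieces: $\int_0^1\mathbb E\,b^2\,dt+\int_0^1\mathbb E\,\nabla\cdot b\,dt=\int_0^1\mathbb E[b\cdot b_*]\,dt$.

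The stationary Ornstein--Uhlenbeck case makes this concrete: $d=1$, $b=-x$, $\rho\propto e^{-x^2}$, $v=0$, $b_*=x$. Here $\mathbb E(q_{t+h}-q_t)^2=1-e^{-h}=h-\frac{h^2}{2}+o(h^2)$, so the limit is $-\frac12=\mathbb E[b\,b_*]$. The drift piece contributes $+\frac12$ and the cross piece $-1$, whereas your evaluation gives $0$ and $+1$.

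\textbf{The time-reversal fallback does not repair this.} As stated, you average $\frac1n\mathbb E[\triangle q_i\cdot b(q_{t_i})]\approx\frac1{n^2}\mathbb E\,b^2$ with $\frac1n\mathbb E[\triangle q_i\cdot b_*(q_{t_{i+1}})]\approx\frac1{n^2}\mathbb E\,b_*^2$. That produces $\frac1{n^2}\int(v^2+(\frac12\nabla\ln\rho)^2)\rho\,dx$, with the wrong sign on the $(\frac12\nabla\ln\rho)^2$ term. The identity $\mathbb E(\triangle q_i)^2=\frac dn+\frac1{n^2}\mathbb E(b\cdot b_*)+o(n^{-2})$ that you quote is true, but nothing in the proposal derives it.

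\textbf{No time reversal is needed.} The paper keeps both contributions, obtaining $\mathbb E\int_0^1(b^2+\nabla\cdot b)\,dt$. It then integrates by parts in $x$ on each component of $N_\rho^c$. The boundary flux of $\rho b=\rho v+\frac12\nabla\rho$ vanishes by h3) and because $\nabla\rho=0$ on $N_\rho$. That is exactly the observation you already made, and it is the only form of duality actually used.

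Separately, in your localization, taking $j\to\infty$ after $n\to\infty$ requires control of the discretization remainders near the nodes uniformly in $n$. The finite action bound alone does not supply this. The paper is no more careful here; it simply drops $\mathcal X_{\Omega_o}$.
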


\begin{proof} The proof is essentially the same as in reference \cite{Nelson*}. We report it in detail for reader's convenience.

Define
$
\Omega_o := \{\omega : q_t^b (\omega) \in N_\rho^c \; \forall t \in [0,1]\}
$

We can easily calculate the path-wise expansion of $b(q_s^b,s)$ in Taylor series and insert it in the expression
$$
\mathcal X_{\Omega_o} \Delta q_t^b\equiv \mathcal X_{\Omega_o}( \int_{t_i}^{t_{i+1}} b(q_s^b, s) ds +
\triangle W_{t_i}).
$$
One can see that
$$
b(q_s^b, s) = b(q^b_{t_i},t_i)+ \nabla b(q^b_{t_i},t_i) (q_s-q_{t_i}) +O(q_s-q_{t_i} )^2=
$$
$$
=b(q^b_{t_i},t_i)+ \nabla b(q^b_{t_i},t_i) (W_s-W_{t_i}) +\mathcal A _i(s,t_i)
$$

The order of $\mathcal A _i(s,t_i)$ with respect to $(s-t_i)$ can be estimated by  espliciting the lagrangian rest and by recalling that the paths of a  the Brownian Motion $W$ satisfy the Holder condition

$$
|W_s-W_t|<  |s-t|^\gamma , \quad \forall \gamma \in [0,\frac 1 2 )
$$

Putting for example $\gamma :=\frac 1 4$ one finds that $\mathcal A_i(s-t)$ is $o(s-t)^{\frac 3 2}$.

\noindent We find

\begin{align*}
\mathbb{E}\mathcal X_{\Omega o} (\triangle q_{t_i}^b)^2 =&
\mathbb{E}  \mathcal X_{\Omega _o} \{\Delta W_{t_i} + b(q_{t_i, t_i}^b)\frac 1 n +\\
+& \nabla b (q_{t_i}^b, t_i) \int_{t_i}^{t_{i+1}}(W_s - W_{t_i}) ds +
o(\frac 1 n )^{\frac 3 2}\}^2
\end{align*}

Since by Proposition 1 one has that  $\mathbb{P}(\Omega_o)$ is equal to $ 1$, we can drop $\mathcal X_{\Omega_o}$ in the expectation and finally, recalling the "independence of the past" of a Brownian Motion and that $\mathbb E|W(s)-W(t)|^2= |s-t|$,  we get

$$
\mathbb{E}(\triangle q_{t_i}^b)^2 = \mathbb{E}\lbrace \frac d n + b^2 (q_{t_i}^b,t_i)\frac 1 {n^2}
+ \nabla b (q_{t_i}^b , t_i)\frac 1 {n^2} + o (\frac 1 {n^2})\rbrace^2
$$
Finally, by taking the limit of the Riemannian sums, we have

\begin{equation}\label{rin1}
\lim_{n\rightarrow\infty} n \mathbb{E} \sum_{i=0}^{n-1} (\triangle q_i^b)^2 - nd =
\mathbb{E}\int_0^1 (b^2 (q^b_t , t) + \nabla b (q_t^b, t))dt   \quad\quad
\end{equation}

Moreover, putting $b = v + \frac 1 2 \nabla ln \rho$, recalling  $h_3$) in Definition 2 and the finite action
condition in Definition 1, we can integrate by parts to get

\begin{equation}\label{rin2}
\mathbb{E}\int_0^1 (b^2(q_t^b, t)+\nabla b(q_t^b, t))dt = \int_0^1 \int_{\mathbb{R}^d}
 (v^2 - (\frac 1 2 \nabla ln \rho)^2)\rho dx dt <\infty   \quad\quad
\end{equation}
\end{proof}

\section {Constructing an ancillary convex functional in a fully probabilistic setting }
 Let $(\Omega,\mathcal F,(\mathcal F_t)_t)$  be defined as in Theorem 1 and let $X$ be the configuration process.
 Let  also $\mathbb P$ be a probability measure on $(\Omega,\mathcal F,(\mathcal F_t)_t)$   and $W$ a standard Brownian Motion on $(\Omega,\mathcal F,(\mathcal F_t)_t,\mathbb P)$.
We assume that $\mathbb P$ is such that $X_o$ has probability density equal to $\rho_o$.

We introduce the following space of  stochastic processes
\begin{equation}\label{Ldue}
L^2_{[0,1]}(\mathbb{P}) := \{\beta:\Omega \times [0, 1]\rightarrow\mathbb{R}^d\; s.t.
\int_\Omega \int_0^1\beta^2(\omega,s) ds \mathbb P(d\omega) <\infty\}.
\end{equation}
$L^2_{[0,1]}(P)$ is a Hilbert space with scalar product

$$
<\beta,\gamma>_{L^2_{[0,1]}(\mathbb P)}:= \int_\Omega \int_0^1 \beta (\omega,t)\gamma(\omega,t) dt \mathbb P(d\omega)
$$

\noindent $\beta$ and $\gamma$ belonging to $ L^2_{[0,1]}(\mathbb{P})$. 

We define

 $$
 q_t^\beta := X_0 + \int_0^t \beta_s ds +W_t ,\quad   X_0(\omega)=\omega(0)
 $$
  and
$$
\Delta q_i^\beta := q_{t_{i+1}}^\beta - q_{t_i}^\beta
$$

 Considering the equipartition $\{t_i\}_{i=0}^{n}$ of $[0,1]$ and denoting by $\mathbb E$ the integration with respect to $\mathbb P$, we introduce the elementary convex functional $F_n: L_{[0,1]}^2 (\mathbb{P})\rightarrow \mathbb{R}$

\begin{equation}\label{Fn}
F_n (\beta) :=  n\mathbb E \sum_{i=0}^{n-1} (\triangle q_i^\beta)^2
\end{equation}
One can observe that, dropping the Brownian Motion and reducing the randomness only to the initial conditions, this functional can be seen as an average on the initial configurations of the classical action in discrete time, for a free finite dimensional system.

We say that $\beta$ is a "Markovian drift"  if
$\beta_t = \beta_t^b $ where

\begin{equation}\label{beta}
\beta_t^b:= b (q^b_t,t),
\end{equation}

and \noindent $q^b$ satisfies the stochastic differential equation

$$
q^b_t = X_o+\int _o^t b(q^b_s,s)ds +W_t 
$$

 Recalling the notation

\begin{equation}\label{brhov}
b[\rho,v] := v +\frac 1 2 \nabla \log \rho ,
\end{equation}

\noindent if $b$ is equal to $b[\rho,v]$  with $(\rho,v)\in \Xi$ , then, by the finite action condition in Defintion 1,
 $\beta^b$  belongs to $L^2_{[0,1]}(\mathbb P)$.

Moreover, exploiting  Nelson's renormalization formula (Lemma 2),  we can take the limit for $n$ going to infinity, getting

\begin{equation}
\lim_{n\rightarrow\infty} n \mathbb{E} \sum_{i=0}^{n-1} (\triangle q_i^b)^2 - nd =\int_0^1 \int_{\mathbb{R}^d}
 (v^2 - (\frac 1 2 \nabla ln \rho)^2)\rho dx dt <\infty
\end{equation}
 \noindent where the limit is finite thanks to the finite action condition

Then, for any "Markovian element" $\beta^b$,
$b\equiv b[\rho,v]$, $(\rho,v)$ in $\Xi $, the limit for $n$ going to infinity of $F_n(\beta^b)- nd$ exists and we have

\begin{equation}
\lim_{n\rightarrow\infty} (F_n(\beta^b)- nd)< \infty
\end{equation}

\noindent Moreover, for any $\lambda\in[0,1]$ and $(\beta^{b_1},\beta^{b_2}) \in L^2_{[0,1]}(\mathbb{P})$, $b_1 :=b[\rho_1,v_1]$ and $b_2 :=b[\rho_2,v_2]$ with $(\rho_1,v_1)$ and $ (\rho_2,v_2)$ in $\Xi$,

\begin{multline}
\lim_{n\rightarrow\infty}(F_n(\lambda\beta^{b_1} + (1 - \lambda)\beta^{b_2})- nd)\leq\\
 \leq \lim_{n\rightarrow\infty} \{\lambda (F_n(\beta^{b_1}) - nd) + (1-\lambda)(F_n(\beta^{b_2})- nd)\} < \infty
\end{multline}

Since the convex combination of three elements is equal to the convex combination of proper two elements, one can see by induction that for any finite convex combination
$\sum_{i=1}^m \alpha_i\beta^{b_i}$, $(\beta ^{b_i})_{i=1}^{m}$  being Markovian elements in $L_{[0,1]}^2 (\mathbb{P})$, we have
$$
\underset {n\to \infty} {\lim}[F_n (\sum_{i=1}^m \alpha_i \beta^{b_i}) - nd] < \infty
$$

Denoting by $\Sigma$ the convex  set given by all finite convex combinations of Markovian elements
in $L^2_{[0,1]}(\mathbb{P})$, we can define the convex functional

$$\hat{F}_\infty : L^2_{[0,1]}(\mathbb{P})\to \bar \R $$

\begin{equation}\label{Finfinito}
\hat F_\infty (\beta) := \begin{cases}
                               \underset {n\to \infty}{lim}(F_n(\beta)- nd)\quad &\forall\beta \in \Sigma \\
                               +\infty \quad \text {otherwise}
                             \end{cases}
\end{equation}

The elements of $\Sigma$ are not markovian in general and they are somehow reminiscent of the quantum mixtures, but we must emphasize that describing quantum mixtures would require an enlarged probability space (see for example \cite{CPM}).

The action functionals $A^Q$ and $\hat F_\infty$ satisfy the equality

\begin{equation}\label {equality} 
A^Q (\rho,v) = \hat F_\infty (\beta^{b[\rho,v]})
\end{equation}

\noindent for all $(\rho,v)$ in $\Xi$.

\vskip 10mm

\section{Optimality result}

\vskip 4mm

In this section we show that Lemma 1 and the convexity of $\hat F_\infty$ allows to face the minimisation problem $P_1$ for $A^Q$, which was formulated in Section 2.

\begin{thm}

Let $(\rho_o,\rho_1)$ be probability densities on $\R^d$ with finite variance.

Assume that $\psi:= \rho^{\frac 1 2}\exp^{iS}$ is a solution  in $L^2(\R^d, \mathbb C)$ of the free Schroedinger equation.

\begin{equation}\label{Schroedinger2}
i\partial_t \psi + \frac1 2 \nabla^2 \psi = 0\quad,\quad |\psi_0|^2 = \rho_0 , |\psi_1|^2 = \rho_1,
\end{equation}

\noindent and that $(\rho,\nabla S)$ belongs to $ \Xi(\rho_o,\rho_1)$ as defined in Section 3 , definition 2 . We assume also that  $\psi$ is the unique solution of \eqref {Schroedinger2} in $L^2(\R^d, \mathbb C)$ up to a constant phase and that $\nabla S (x,t)$ be different from zero for all $(x,t)$ belonging to $N_\rho^c$.

Then

\begin{equation}\label{inequality}
A^Q(\rho,\nabla S) \leq A^Q(\rho',v')
\end{equation}

$$
\forall (\rho',v')\in \Xi(\rho_o,\rho_1)\quad   s.t. \quad N_{\rho'} \supseteq N_{\rho} 
$$

\end{thm}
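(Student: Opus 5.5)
The plan is to transfer the problem to the convex functional $\hat F_\infty$ of Section 5 and to use the identity \eqref{equality}, $A^Q(\rho,v)=\hat F_\infty(\beta^{b[\rho,v]})$ for $(\rho,v)\in\Xi$. Fix $(\rho',v')\in\Xi(\rho_o,\rho_1)$ with $N_{\rho'}\supseteq N_\rho$. Put $b:=b[\rho,\nabla S]$ and $b':=b[\rho',v']$. By Proposition 2 both SDEs have strong solutions, so $\beta^b$ and $\beta^{b'}$ are realised on the same space $(\Omega,\mathcal F,\mathbb P,W)$ as elements of $L^2_{[0,1]}(\mathbb P)$.

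Since the convex combination $\lambda\beta^{b'}+(1-\lambda)\beta^{b}$ is in general not Markovian, I would not use the segment in $\Sigma$ directly. Instead I would interpolate at the level of characteristics, along a curve $(\rho_y,v_y)$, $y\in[0,1]$, in $\Xi(\rho_o,\rho_1)$ joining $(\rho,\nabla S)$ to $(\rho',v')$. The natural choice is $\rho_y=(1-y)\rho+y\rho'$ together with $\rho_yv_y=(1-y)\rho\nabla S+y\rho'v'$. This satisfies the continuity equation and has fixed marginals. Its node set is $N_{\rho_y}=N_\rho\cap N_{\rho'}=N_\rho$ for $y<1$, and this is exactly where the hypothesis $N_{\rho'}\supseteq N_\rho$ is used. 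Consequently every admissible variation direction is supported in $N_\rho^c$, which is the setting of Lemma 1, and by the Remark after Lemma 1 a boundary term would otherwise appear. Property h3 should also be preserved along the curve. Proposition 3 then gives the smooth dependence of $q^{b[\rho_y,v_y]}$ on $y$, with finite moments. The function $g(y):=A^Q(\rho_y,v_y)=\hat F_\infty(\beta^{b[\rho_y,v_y]})$ is therefore differentiable.

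The key steps are as follows.
\begin{enumerate}
\item By the Corollary to Lemma 1, using that $\psi$ solves \eqref{Schroedinger2}, that $(\rho,\nabla S)\in\Xi(\rho_o,\rho_1)$ and that $\nabla S\neq 0$ on $N_\rho^c$, obtain $g'(0)=DA^Q(\rho,\nabla S)(\delta\rho,\delta v)=0$. The direction $(\delta\rho,\delta v)=(\rho'-\rho,\dots)$ is first approximated by compactly supported directions in $N_\rho^c$ satisfying a), b), c), for instance through a cutoff $\Phi_j$ as in Proposition 2, and one passes to the limit using the finite action condition and h3.
\item Show that $g$ is convex on $[0,1]$. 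For each $n$, $F_n$ is a convex quadratic function of $\beta$. Expanding $F_n(\beta^{b_y})$ to second order in $y$, the second derivative equals $n\mathbb E\sum_i(\Delta\partial_y^2 q_i)\cdot\Delta q_i$ plus a nonnegative term $n\mathbb E\sum_i(\Delta\partial_yq_i)^2$. I would try to show, via the renormalization formula (Lemma 2) applied along the curve, that the limit of $F_n-nd$ inherits convexity from the convexity of $F_n$ on $\Sigma$.
\item Combine the two: convexity gives $g(1)\ge g(0)+g'(0)=g(0)$, which is \eqref{inequality}. Uniqueness of $\psi$ up to a phase identifies the critical point with the given solution.
\end{enumerate}

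I expect step 2 to be the main obstacle. The paper's convexity statement for $\hat F_\infty$ concerns linear combinations of drifts, whereas the map $y\mapsto\beta^{b[\rho_y,v_y]}$ is nonlinear, so convexity of $g$ does not follow formally. My first attempt would be to bypass $g$. Convexity of $\hat F_\infty$ on $\Sigma$ gives $\hat F_\infty(\beta^{b'})\ge\hat F_\infty(\beta^b)+D\hat F_\infty(\beta^b)(\beta^{b'}-\beta^b)$. It would then remain to prove that this directional derivative vanishes, by identifying it with $DA^Q(\rho,\nabla S)$ along the tangent direction $\partial_y\beta^{b_y}|_{y=0}$ (via Proposition 3) and applying Lemma 1. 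This identification is the delicate point, and it is where the node-inclusion hypothesis must again be exploited.
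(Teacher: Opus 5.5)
Your step 2 is a genuine gap, and so is the identification in your fallback. Neither can be repaired, because the inequality itself fails. Write $m=\rho v$. Then
\[
A^Q(\rho,v)=\int_0^1\!\!\int_{\R^d}\frac{|m|^2}{\rho}\,dx\,dt-\frac14\int_0^1\!\!\int_{\R^d}\frac{|\nabla\rho|^2}{\rho}\,dx\,dt ,
\]
which is a difference of two convex functionals of $(\rho,m)$. Your curve is affine in $(\rho,m)$, so $g''(y)$ contains the term $-\frac12\iint\rho_y|\nabla((\rho'-\rho)/\rho_y)|^2\,dx\,dt$, and nothing controls it.

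Here is an explicit competitor. Take an open ball $B$ with $\overline B\subset N_\rho^c\cap\{0<t<1\}$, a nonzero $\chi\in C_K^\infty(B)$, and $w_k:=k^{-1}\chi\sin(kx_1)e_1$. Set $\rho'=\rho+\epsilon\nabla\cdot w_k$ and $\rho'v'=\rho\nabla S-\epsilon\,\partial_tw_k$, with $\epsilon>0$ fixed so small that $\rho'>0$ on $B$ for every $k\ge 1$.
\begin{itemize}
\item The continuity equation holds exactly, and $\int\nabla\cdot w_k\,dx=0$.
\item Nothing changes near $N_\rho$ or at $t=0,1$, so $(\rho',v')\in\Xi(\rho_o,\rho_1)$ with $N_{\rho'}=N_\rho$.
\item The kinetic term stays bounded in $k$, because $\partial_tw_k=O(1/k)$.
\item $\nabla\rho'$ contains $-\epsilon k\chi\sin(kx_1)e_1$, so the Fisher term is at least $C\epsilon^2k^2-C'$.
\end{itemize}
Hence $A^Q(\rho',v')\to-\infty$ as $k\to\infty$. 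This holds for every $(\rho,\nabla S)\in\Xi(\rho_o,\rho_1)$, independently of the uniqueness and $\nabla S\neq0$ hypotheses.

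Now fix $k$ large and let $\epsilon\to0$. The first variation vanishes by Lemma 1, while the second variation is negative. So $(\rho,\nabla S)$ is a saddle point of $A^Q$, not even a local minimum. In your fallback the same phenomenon appears as $D\hat F_\infty(\beta^b)(\beta^{b'}-\beta^b)\to-\infty$. The convexity inequality along the chord is fine, but Lemma 1 only annihilates directions tangent to curves of Markovian drifts, and the chord $\beta^{b'}-\beta^b$ is not such a direction.

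The paper's proof follows the same scheme as your plan:
\begin{itemize}
\item It uses $\rho_y=\rho+yg$, $v_y=\nabla S+Z_y^g$, which is again affine in $(\rho,\rho v)$.
\item It obtains a vanishing first variation from Lemma 1 and transfers it to $\hat F_\infty$ via $A^Q=\hat F_\infty(\beta^{b[\rho,v]})$ and Proposition 3.
\item It asserts that, since $\hat F_\infty$ is convex on straight lines of $L^2_{[0,1]}(\mathbb P)$, the map $y\mapsto\hat F_\infty(\beta^{b[\rho_y,v_y]})$ is locally convex near $y=0$.
\item It upgrades this local minimum to a global one using uniqueness of $\psi$ and the necessity part of Lemma 1, then removes the compact-support assumption on $g$ with the cutoffs $h_j$.
\end{itemize}

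The convexity step is exactly the inference you correctly said does not follow: $y\mapsto\beta^{b[\rho_y,v_y]}$ is nonlinear, and the term involving $\partial_y^2\beta$ has no sign. The example above shows the step is false already in the paper's first case. There $g=\epsilon\nabla\cdot w_k$ is compactly supported in $N_\rho^c$, and $v'=\nabla S+Z^g_1$ for a compactly supported solution of the paper's linear equation for $u$. The local-to-global step, namely that a unique critical point must be a global minimiser, is not valid either.

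So your diagnosis of the obstacle is correct. But any minimality statement of this kind would need an additional constraint controlling the Fisher-information part of $A^Q$.
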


\begin{proof}

Let $(\rho',v')$ belong to $ \Xi(\rho_o,\rho_1)$ and let  

\begin{equation}\label{nodi}
N_{\rho'} \supseteq N_{\rho}
\end{equation}
 
 Put $g:= \rho'-\rho$. Then, by \eqref{nodi},  $g$ is equal to zero on $N_\rho$, because the nodes of $\rho$ are also nodes of $\rho'$,  so that its support is a subset of $N_\rho^c$. Moreover, being $\rho'$ and $\rho$ both normalized to $1$ one has

 \begin{equation}\label{zero}
 \int_{\R^d} g dx=0
 \end{equation}
 
  Then $\rho +yg$ is also normalized to $1$ for all $y\in [0,1]$ . We notice that  $N_{\rho+yg} \supseteq N_{\rho}$ for all $y\in [0,1]$ because, being $g$ equal to zero on $N_\rho$,
 no node of $\rho$ can be delated in the variation.
  
  \vskip 10mm

We consider firstly the case when $g$ is of class $C_K^\infty$.
\vskip 2mm

 (notice: all equations in the following must be understood for all $(x,t)\in N_\rho^c$)

\vskip 2mm
Introducing a time dependent vector field $Z_y^g: N_\rho ^c\to \R ^d$, we consider the family $(\rho_y,v_y)_{y\in [0,1]}$  defined by

\begin{equation}\label{family}
\begin{cases}
\rho_y := \rho +yg\\
v_y := \nabla S + Z_y^g
\end{cases}
\end{equation}

Requiring that  $(\rho_y,v_y)$ satisfies, for all $y\in [0,1]$, the continuity equation

\begin{equation}
\partial _t\rho_y + \nabla\cdot (\rho_y v_y)=0
\end{equation}

\noindent and putting

$$
u:= (\rho +y g)Z_y^g
$$

\noindent one gets
$$
 \nabla \cdot u= -y[\partial _t g+\nabla \cdot (g \nabla S)]
$$

\vskip 2mm

This linear equation in the unknown $u$ can be  easily solved and  the solution depends smoothly on $y$.
 
  Recalling in particular condition \eqref{zero}, one can see that  $Z_y^g$  belongs to $ C_K^\infty (N_\rho^c \to \R^d)$.
 Thus  $(\rho_y,v_y)$ belongs to $ \Xi(\rho_o,\rho_1)$  for all $y\in [0,1]$.

One can also check that  $\frac {d}{dy} X_y^g|_{y=o}$ also belongs to $C_K^\infty (\R^d\times [0,1]\to \R^d)$ and its support is a subset of $N_\rho^c$.

 Defining

$$
\delta \rho := g
$$
 and
$$
\delta v:= \frac {d} {dy} Z_y^g|_{y=0}
$$
\noindent one can see that conditions a) and b) in Lemma 1 are satisfied. Then

\begin{equation}\label{base}
\lim_{y\to 0}\frac 1 y (A^Q(\rho_y,v_y)-A^Q(\rho,\nabla S))=
 D A^Q(\rho,\nabla S)(\delta\rho,\delta v)=0
\end{equation}
\vskip2mm
 
 We can exploit equality \eqref{equality} to get

\begin{equation}\label{uno}
 \underset{y \to 0}{\lim}\frac 1 y( \hat F_\infty (\beta ^{b[\rho_y, v_y]) }) - \hat F_\infty (\beta ^{b[\rho, \nabla S]})) =0
 \end{equation}
 
 \vskip 2mm
 We observe that, since one can see that $b[\rho_y, v_y](x,t)$ depends smoothly on $y$ for all $(x,t)\in N_\rho^c$ by construction, then, by Propositions 3  in Section 4, $\beta ^{b[\rho_y, v_y]}$  also depends smoothly on $y$ and its derivatives have finite momenta. Thus in particular $\frac {d}{dy} \beta ^{b[\rho_y, v_y]) }|_{y=0}$ belongs to $L^2_{[0,1]}(\mathbb P)$.

 We can expand in Taylor series $\beta ^{b[\rho_y, v_y]) }$ with respect to $y$ in \eqref{uno}, getting the equality
 
 $$
 \underset{y \to 0}{\lim}\frac 1 y( \hat F_\infty (\beta ^{b[\rho, \nabla S]} + y\frac {d} {dy} \beta ^{b[\rho_y, v_y]) }|_{y=0}) - \hat F_\infty (\beta ^{b[\rho, \nabla S]}) =0
 $$

 But, being $\hat F_\infty$ convex on any straight line in $L^2_{[0,1]} (\mathbb P)$, $\hat F_\infty (\beta ^{b[\rho_y, v_y]})$ is locally convex as function of $y\in [0,1]$  in a neighborhood of $y=0$.
  
Then $A^Q(\rho_y,v_y) \equiv \hat F_\infty (\beta ^{b[\rho_y, v_y]}) $ has a local minimum for $y=0$.

If $\psi :=\rho^{\frac 1 2} \exp{iS }$ is the unique $L^2$-solution of the free Schroedinger equation \eqref{Schroedinger2}, up to a constant phase, and $\nabla S(x,t)$ is different from zero in every point of $N_\rho^c$, then, by the necessary condition in Lemma 1, the minimum is also global.

Concluding we have

 $$
A^Q(\rho,\nabla S) \leq A^Q(\rho_y,v_y)  \quad \quad  \forall  y\in [0,1]
$$
\noindent and consequently

$$
A^Q(\rho,\nabla S) \leq A^Q(\rho',v')
$$
\noindent where $\rho' -\rho =g$, $g\in C_K^\infty (N_\rho^c, \R)$.

\vskip 20mm 

For a generic element $(\rho',v')$  of $ \Xi (\rho_o,\rho_1)$ put again $g:= \rho'-\rho$ and let $N_{\rho'} \supseteq N_{\rho}$.

 Then g is smooth and such that $\int_{\R^d} g dx =0$ but we do not assme that its support is compact.

 Exploiting again the procedure considered in the proof of Proposition 2,  let  $(D_j)_j $ be a sequence of compact subsets of $N_\rho^c$ such that $D_j\uparrow N_\rho^c$ and define the sequence $( h_j )_j \subset C_K^\infty (\mathbb{R}\times [0,1]\rightarrow \mathbb{R})$ 
where $0\leq h_j \leq 1$ and
\begin{equation}\label{h}
h_j (x,t) = \begin{cases}
                  1     \qquad     (x,t) \in D_j \\
                  0     \qquad     (x,t) \in \underset{m=j+2}{U^\infty} (D_m \setminus D_{m-1})
               \end{cases}
\end{equation}

 Moreover we choose $(h_j)_j$  in such a way that
 $\int_{\R^d} h_j g dx $ is equal to zero for all $j \in \mathbb N $ .

Then, putting $g_j:= gh_j$, $\rho'_j:=\rho + g_j$ and $v'_j:= v+Z_j^{g_j}$, where $Z_y^{g_j}$, $y\in [0,1]$, is constructed as before, one has, for all $j\in \mathbb N$,

$$
A^Q(\rho,\nabla S)\leq A^Q(\rho'_j,v'_j)=
$$
$$
=\int _{D_j}(v'^2-(\frac 1 2 \nabla \log \rho')^2)\rho' dx dt)+\int _{D_j^c}(v_j'^2-(\frac 1 2 \nabla \log \rho'_j)^2)\rho'_j dx dt)
$$

Then
$$
A^Q(\rho,\nabla S)\leq \lim_{j\to \infty} A^Q(\rho'_j,v'_j)= A^Q (\rho',v'))
$$
\noindent for all $(\rho',v')$ in $ \Xi_(\rho_o,\rho_1)$ and such that $N_{\rho'} \supseteq N_{\rho} $.
\end{proof}

\section {Acknowledgements}

The author wish to thank her friend and colleague Paolo Dai Pra for his kind support and valuable comments.

\newpage
\begin {thebibliography} {}

\bibitem{B.B.}B\'enamou J.D. and Y. Br\'enier Y. A computational fluid mechanics solution to the Monge-Kantorovich mass transfer problem, Numerische Mathematik volume 84, 375-393,(2000)

\bibitem{C1}Carlen E. Conservative diffusions, Communications in Mathematical Physics, volume 94, 293-315, (1984)

\bibitem{C3}Carlen E. Progress and problems in Stochastic Mechanics. In: Gielerak,R., Karwoski,W.(eds ) Stochastic Methods in Mathematical Physics, 3-31, Singapore World Scientific (1989)

\bibitem{CPM}Cufaro Petroni N. and Morato L.M. Entangled states in Stochastic Mechanics. Physica A: Math. Gen.{\bf 33}, 5833-5848 (2000)

\bibitem{GM} Guerra F. and Morato L.: Quantization of
Dynamical Systems and Stochastic Control Theory, {\it Phys.Rev.D},
{\bf 27},1774-1786 (1983)

\bibitem{L} Loffredo M.I. Eulerian Variational Principle in Stochastic Mechanics, private communication, (1986), Rapporto Matematico 226, Universita' di Siena (1990)

\bibitem {M2} Morato L.M. Free time evolution of the quantum wave function and optimal transportation, Journal of Mathematical Physics, (2022)

\bibitem{Nelson1}Nelson E. Dynamical  Theories  of  Brownian  Motion (Princeton,  NJ:  Princeton  University      Press)(1967)

\bibitem{Nelson*}Nelson E. in Seminaire de Probabilites,Vol. XIX of lecture notes in Mathematics, edited by J.Azema and M Yor, Springer, New York,(1984)

 \bibitem{Nelson2} Nelson E. {\it  Quantum Fluctuations} (Princeton University Press)(1985)
 
\bibitem{Reddiger} Reddiger M. and Poirier B. Towards a mathematical Theory of the Madelung Equations: Takabayasi's quantization condition, quantum quasi-irrotationality, weak formulations, and the Wallstrom phenomenon, J. Phys. A: Math. Theor. 56, 193001 (2023)

\bibitem{V} Villani C. {\it Topics in Optimal Transportation}, American Mathematical Soc. ISBN     978-0-8218-3312-4,(2003)

\bibitem {Zeng} Zeng W. in Stochastic Processes-Mathematics and Physics ||,S.Albeverio, Ph.Blanchard  and L.Sreit eds. (Springer ),(1985)

\end {thebibliography}

\end {document}